\documentclass[12pt]{article}
\usepackage[utf8]{inputenc}
\usepackage{amsfonts,latexsym,amsmath,amsthm,amssymb,amscd,euscript,mathrsfs,graphicx}
\usepackage{framed}
\usepackage{fullpage}
\usepackage{color}
\usepackage{enumitem}
\usepackage[obeyFinal,textsize=scriptsize,shadow]{todonotes}
\usepackage{tikz}
\usetikzlibrary{matrix}
\usepackage[hidelinks]{hyperref}
\usepackage{cleveref}
\usepackage{soul} 
\usepackage{todonotes}
\usepackage{algorithm}
\usepackage[noend]{algpseudocode}


\theoremstyle{definition}
\newtheorem{theorem}{Theorem}[section]
\newtheorem{proposition}[theorem]{Proposition}
\newtheorem{fact}[theorem]{Fact}
\newtheorem{lemma}[theorem]{Lemma}

\theoremstyle{remark}


\newcommand{\eps}{\varepsilon}
\newcommand{\ten}{100}

\DeclareMathOperator{\Bin}{Bin}

\DeclareMathOperator{\val}{val}
\DeclareMathOperator{\rev}{rev}
\DeclareMathOperator*{\argmax}{arg\,max}

\DeclareMathOperator*{\polylog}{polylog}
\DeclareMathOperator*{\poly}{poly}

\newcommand{\fC}{\mathfrak{C}}
\newcommand{\fD}{\mathfrak{D}}

\newcommand{\ba}{\mathbf a}
\newcommand{\bb}{\mathbf b}

\title{Near-Optimal Trace Reconstruction for Mildly Separated Strings}
\author{Anders Aamand\thanks{\texttt{aa@di.ku.dk}. University of Copenhagen. This work was supported by the DFF-International Postdoc Grant 0164-00022B and by the VILLUM Foundation grants 54451 and 16582.} \and Allen Liu\thanks{\texttt{cliu568@mit.edu}. Massachusetts Institute of Technology. This work was supported in part by an NSF Graduate Research Fellowship, a Hertz Fellowship, and a Citadel GQS Fellowship.} \and Shyam Narayanan\thanks{\texttt{shyam.s.narayanan@gmail.com}.Massachusetts Institute of Technology. Supported by the NSF TRIPODS Program, an NSF Graduate Fellowship, and a Google Fellowship.}}
\date{\today}

\begin{document}
\maketitle
\begin{abstract}
    In the \emph{trace reconstruction} problem our goal is to learn an unknown string $x\in \{0,1\}^n$ given independent \emph{traces} of $x$. A trace is obtained by independently deleting each bit of $x$ with some probability $\delta$ and concatenating the remaining bits. It is a major open question whether the trace reconstruction problem can be solved with a polynomial number of traces when the deletion probability $\delta$ is constant. The best known upper bound and lower bounds are respectively $\exp(\tilde O(n^{1/5}))$~\cite{Chase21b} and $\tilde \Omega(n^{3/2})~\cite{Chase21a}$. Our main result is that if the string $x$ is \emph{mildly separated}, meaning that the number of zeros between any two ones in $x$ is at least $\polylog n$, and if $\delta$ is a sufficiently small constant, then the trace reconstruction problem can be solved with $O(n \log n)$ traces and in polynomial time. 
\end{abstract}

\section{Introduction}
\emph{Trace reconstruction} is a well-studied problem at the interface of string algorithms and learning theory. Informally, the goal of trace reconstruction is to recover an unknown string $x$ given several independent noisy copies of the string.

Formally, fix an integer $n \ge 1$ and a deletion parameter $\delta \in (0, 1)$. Let $x \in \{0, 1\}^n$ be an unknown binary string with $x_i$ representing the $i$th bit of $x$. Then, a \emph{trace} $\tilde{x}$ of $x$ is generated by deleting every bit $x_i$ independently with probability $\delta$ (and retaining it otherwise), and concatenating the retained bits together. For instance, if $x = 01001$ and we delete the second and third bits, the trace would be $001$ (from the first, fourth, and fifth bits of $x$).
For a fixed string $x$, note that the trace follows some distribution over bitstrings, where the randomness comes from which bits are deleted.
In \emph{trace reconstruction}, we assume we are given $N$ i.i.d. traces $\tilde{x}^{(1)}, \dots, \tilde{x}^{(N)}$, and our goal is to recover the
original string $x$ with high probability.

The trace reconstruction problem has been a very well studied problem over the past two decades~\cite{Levenshtein01a, Levenshtein01b, BatuKKM04, KannanM05, HolensteinMPW08, ViswanathanS08, McGregorPV14, DeOS17, NazarovP17, PeresZ17, HartungHP18, HoldenL18, HoldenPP18, Chase21a, ChenDLSS21a, ChenDLSS21b, Chase21b, Rubinstein23}.
There have also been numerous generalizations or variants of trace reconstruction studied in the literature, including coded trace reconstruction~\cite{CheraghchiGMR19, BrakensiekLS19}, reconstructing mixture models~\cite{BanCFSS19, BanCSS19, Narayanan21}, reconstructing alternatives to strings~\cite{DaviesRR19, KrishnamurthyMMP21, NarayananR21, McGregorS22, SunY23, McGregorS24}, and approximate trace reconstruction~\cite{DaviesRSR21, ChaseP21, ChakrabortyDK21, ChenDLSS22, ChenDLSS23}.

In perhaps the most well-studied version of trace reconstruction, $x$ is assumed to be an arbitrary $n$-bit string and the deletion parameter $\delta$ is assumed to be a fixed constant independent of $n$. In this case, the best known algorithm requires $e^{\tilde{O}(n^{1/5})}$ random traces to reconstruct $x$ with high probability~\cite{Chase21b}. As we do not know of any polynomial-time (or even polynomial-sample) algorithms for trace reconstruction, there have been many works making distributional assumptions on the string $x$, such as $x$ being a uniformly random string~\cite{HolensteinMPW08, McGregorPV14, PeresZ17, HoldenPP18, Rubinstein23} or $x$ being drawn from a ``smoothed'' distribution~\cite{ChenDLSS21a}.
An alternative assumption is that the string $x$ is parameterized, meaning that $x$ comes from a certain ``nice'' class of strings that may be amenable to efficient algorithms~\cite{KrishnamurthyMMP21, DaviesRSR21}.

In this work, we also wish to understand parameterized classes of strings for which we can solve trace reconstruction efficiently. Indeed, we give an algorithm using polynomial traces and runtime, that works for a general class of strings that we call $L$-separated strings. This significantly broadens the classes of strings for which polynomial-time algorithms are known~\cite{KrishnamurthyMMP21}.
%

\paragraph{Main Result}
Our main result concerns trace reconstruction of strings that are \emph{mildly separated}. We say that a string $x$ is $L$-separated if the number of zeros between any two consecutive ones is at least $L$. Depicting a string $x\in \{0,1\}^n$ with $t$ ones as 
$$
\underbrace{0 \dots 0}_{a_0 \text{ times}} 1 \underbrace{0 \dots 0}_{a_1 \text{ times}} 1 \cdots 1 \underbrace{0 \dots 0}_{a_{t} \text{ times}},
$$
it is $L$-separated if and only if $a_i\geq L$ for each $i$ with $1\leq i\leq t-1$.
Note that we make no assumptions on $a_0$ or $a_t$. Our main result is as follows.
\begin{theorem}\label{thm:main}
There exists an algorithm that solves the trace reconstruction problem with high probability in $n$ on any $L$-separated string $x$, provided that $L\geq C(\log n)^8$ for a universal constant $C$, and that the deletion probability is at most some universal constant $c_0$. The algorithm uses $N=O(n\log n)$ independently sampled traces of $x$, $\tilde{x}^{(1)}, \dots, \tilde{x}^{(N)}$, and runs in $\poly(n)$ time.
\end{theorem}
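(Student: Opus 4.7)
The plan is to reconstruct $x$ iteratively from left to right, determining $a_0, a_1, \ldots, a_t$ in order; equivalently, the positions $p_1 < p_2 < \cdots < p_t$ of the ones in $x$ are determined one at a time. The driving observation is that, since $x$ is $L$-separated with $L \geq C(\log n)^8$, the standard deviation of the length of a single run of zeros between two consecutive ones in a trace is only $O(\sqrt{L \log n})$, which is much smaller than the separation $L$ itself. Thus, as long as a trace can be correctly aligned with the portion of $x$ already reconstructed, the next run of zeros in that trace gives an almost exact estimator of the corresponding $a_k$.

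I would first bootstrap by reconstructing the initial block $a_0, \ldots, a_{C \log n - 1}$. Each trace in which the first $C \log n + 1$ ones of $x$ all survive can be recognized by examining the gap pattern among the first $C \log n + 1$ ones in the trace and checking that these gaps are all consistent with being single-$a_i$ gaps (rather than merged gaps from deleted ones). Such traces occur with probability $(1-\delta)^{C \log n + 1} = n^{-O(1)}$, so with $N = O(n \log n)$ traces we obtain $n^{1-o(1)}$ useful ones, enough to determine each $a_i$ in the initial block to error $o(1)$ and then round. A symmetric right-to-left procedure handles the last block.

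For the main induction, suppose $a_0, \ldots, a_{k-1}$ are already known exactly. In each trace I would scan for a sequence of $C \log n$ consecutive ones whose gaps match the known pattern $(1-\delta) a_{k - C\log n}, \ldots, (1-\delta) a_{k-1}$ within tolerance $O(\sqrt{L \log n})$; if the scan succeeds and is followed by another one in the trace, the subsequent gap is recorded. With probability at least $(1-\delta)^{C\log n + 2}$ the ones $p_{k-C\log n}, \ldots, p_{k+1}$ all survive, the matched anchor is the correct one, and the recorded gap is approximately $(1-\delta) a_k$. If instead $p_{k+1}$ is deleted but a later one survives, the recorded gap is $(1-\delta)(a_k + a_{k+1} + \cdots)$, which is separated from $(1-\delta) a_k$ by at least $(1-\delta) L$, much more than the noise. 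The recorded values therefore form a mixture whose leftmost mode lies at $(1-\delta) a_k$, and averaging within that mode recovers $a_k$ to error $o(1)$.

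The main obstacle will be calibrating the anchor length $C \log n$ so that two competing constraints are simultaneously met: the expected number of useful traces per iteration must be large enough to estimate $a_k$ within $1/2$, yet the probability that the anchor pattern falsely matches an unrelated portion of some trace must be small even after a union bound over all $\Theta(n/L)$ iterations and $\poly(n)$ candidate positions per trace. The $(\log n)^8$ separation is what provides enough slack for both: the polynomial gap between the noise scale $\polylog n$ and $L$ makes anchor matching extremely robust, while the fact that $N/n = \log n$ guarantees that even filter events of probability $n^{-o(1)}$ still produce polynomially many usable samples per step. Formalizing this balance, and in particular the false-match concentration bounds across all trace gap patterns, is the crux of the proof.
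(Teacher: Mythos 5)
Your overall plan is broadly in the same spirit as the paper's (sequential left-to-right reconstruction via gap matching), but the crucial unstated claim --- that a length-$C\log n$ anchor uniquely localizes the $m$th one of $x$ in a trace --- is simply false for periodic gap sequences, and that is exactly where the paper needs its heavy machinery. Take $a_0 = a_1 = \cdots = a_j = L$ with $j \gg C\log n$ (a perfectly legal $L$-separated prefix), set $a_{j+1} = 2L$, and try to reconstruct $a_k$ with $k = j+1$. The anchor pattern $(1-\delta)a_{k-C\log n}, \dots, (1-\delta)a_{k-1}$ is the constant sequence $(1-\delta)L, \dots, (1-\delta)L$, so in a trace where all of $p_1, \dots, p_{k+1}$ survive, \emph{every} shifted window of $C\log n$ consecutive single-run gaps inside the periodic prefix matches the anchor equally well, and no choice of anchor length can disambiguate them. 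If the scan picks any window strictly left of the true one, the recorded ``next gap'' is $\approx(1-\delta)L$ rather than $(1-\delta)a_k = 2(1-\delta)L$, so the leftmost mode of your recorded values sits at $(1-\delta)L$, not at $(1-\delta)a_k$; the ``average within the leftmost mode'' rule then outputs $L$ instead of $2L$. This is precisely the ``falling behind'' phenomenon the paper isolates: on strings with a long run followed by a periodic block, a left-to-right alignment is shifted left with probability $\Theta(\delta)$, and the misaligned estimate is \emph{smaller} than the true one, so the leftmost-mode heuristic selects the wrong component of the mixture.

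The paper's escape is not a sharper concentration bound for anchor uniqueness (none can exist for periodic strings) but two separate ideas that your proposal lacks. First, \Cref{lem:main-technical} (proved via a separate periodic-case argument in \Cref{lem:periodic-bound} and the Catalan-number induction of \Cref{lem:computation}) shows that the sequential alignment falls behind by even one step with probability only $O(\delta)$; this yields a \emph{coarse}, median-based estimate $|b_m-(1-\delta)a_m| \le O(\sqrt{a_m})$ from $O(\log n)$ traces, but not an exact one. Second, and this is the crux of the fine-estimation phase, the algorithm runs the alignment both from the left and from the right on each trace; because the alignment is provably ``never ahead'' with $1-1/\poly(n)$ probability (\Cref{lem:not-ahead}), the forward and backward alignments can be mutually consistent \emph{only if both are exactly correct}, so misaligned traces are certified and discarded before averaging (\Cref{lem:word-pushing}). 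Without such a certification mechanism the mixture you describe cannot be disentangled, since the spurious component sits to the left of the true one. (A secondary gap: your bootstrap filter is circular --- deciding that gaps ``are single-$a_i$ gaps'' presupposes knowledge of the $a_i$, which is precisely what the bootstrap is supposed to determine, and with only medians across traces it is not clear how to certify that all of the first $C\log n + 1$ ones survived in a given trace.)
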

We note that the number of traces is nearly optimal. Even distinguishing between two strings $x,x'$ which contain only a single one at positions $\lfloor n/2 \rfloor$ and $\lfloor n/2 \rfloor+1$ respectively, requires $\Omega(n)$ traces to succeed with probability $1/2+\Omega(1)$.

While trace reconstruction is known to be solvable very efficiently for random strings \cite{HoldenPP18, Rubinstein23}, there are certain structured classes of strings that appear to be natural hard instances for existing approaches.  Our algorithm can be seen as solving one basic class of hard instances.  It is worth noting the work by~\cite{ChenDLSS21a} which studies the trace reconstruction problem when the deletion probability $\delta$ is sub-constant. They show that the simple Bitwise Majority Alignment (BMA) algorithm from~\cite{BatuKKM04} can succeed with $1/n^{o(1)}$ deletion probability as long as the original string does not contain \emph{deserts} \--- which are highly repetitive blocks where some short substring is repeated many times.  They then combine this with an algorithm for reconstructing repetitive blocks \--- but this part of their algorithm requires a significantly smaller deletion probability of $\delta\leq 1/n^{1/3+\eps}$.  This suggests that strings containing many repetitive blocks are a natural hard instance and good test-bed for developing new algorithms and approaches.  $L$-separated strings can be thought of as the simplest class of highly repetitive strings (where the repeating is pattern is just a $0$), where every repetition has length at least $L$.


\paragraph{Comparison to Related Work}
Most closely related to our work is the result by Krishnamurthy et al.~\cite{KrishnamurthyMMP21} stating that if $x$ has at most $k$ ones and if each pair of ones is separated by a~\emph{run} of zeros of length $\Omega(k\log n)$, then $x$ can be recovered in polynomial time from $\poly(n)$ many traces. In particular, for strings with $k=O((\log n)^7)$ ones, the required separation is milder than ours, albeit not below $\Omega(\log n)$. Our algorithm works in general assuming a $\polylog n$ separation of the ones but with no additional requirement on the number of ones: indeed, we could even have $\frac{n}{\polylog n}$ ones. Assuming no sparsity assumptions, \cite{KrishnamurthyMMP21} would need to set $L \ge \Omega(\sqrt{n \log n})$, as a $\sqrt{n \log n}$-separated string can be $\Theta(\sqrt{n/\log n})$-sparse in the worst case. The techniques of~\cite{KrishnamurthyMMP21} are also very different than ours. They recursively cluster the positions of the ones in the observed traces to correctly align a large fraction of the ones in the observed traces to ones in the string $x$. In contrast, our algorithm works quite differently and is of a more sequential nature processing the traces from left to right (or right to left). See~\Cref{sec:technical_contributions} for a discussion of our algorithm.

Another paper studying strings with large runs is by Davies et al.~\cite{DaviesRSR21}. They consider~\emph{approximate} trace reconstruction, specifically how many traces are needed to approximately reconstruct $x$ up to edit distance $\eps n$ under various assumptions on the lengths of the runs of zeros and ones in $x$. Among other results but most closely related to ours, they show that one can $\eps$-approximately reconstruct $x$ using $O((\log n)/\eps^2)$ traces if the runs of zeros have length $\gg \frac{\log n}{\eps}$ and if the runs of ones are all of length $\leq C \log n$  or $\gg 3C\log n$ for a constant $C$ (e.g. they could have length one as in our paper). However, for exact reconstruction, they would need to set $\eps < 1/n$, which means they do not provide any nontrivial guarantees in our setting.

\subsection{Technical Contributions}\label{sec:technical_contributions}
In this section, we give a high level overview of our techniques. Recall that we want to reconstruct a string $x\in \{0,1\}^n$ from independent traces $\tilde x$ where we assume that $x$ is mildly separated. More concretely, we assume that there are numbers $a_0,\dots,a_t \gg \polylog n$ such that $x$ consists of $a_0$ zeros followed by a one, followed by $a_1$ zeros followed by a one and so on, with the last $a_t$ bits of $x$ being zero. Writing $a_{\leq i}=\sum_{0\leq j\leq i} a_j$, we thus have that there are $t$ ones in $x$ at positions $a_{\leq i}+i+1$ for $0\leq i\leq t-1$. 

Note that a retained bit in a trace $\tilde x$ naturally corresponds to a bit in $x$. More formally, for a trace $\tilde x$ of length $\ell$, let $i_1<\cdots<i_\ell$ be the $\ell$ positions in $x$ where the bit was retained when generating $\tilde x$ so that $\tilde x=x_{i_1}\cdots x_{i_\ell}$. Then, the correspondence is defined by the map from $[\ell]$ to $[n]$ mapping $j\mapsto i_j$. We think of this map as the correct \emph{alignment} of $\tilde x$ to $x$.

Our main technical contribution is an alignment algorithm (see~\Cref{alg:findmthone}) which takes in some $m \leq t$ and estimates $b_0,\dots,b_{m-1}$ of $a_0,\dots,a_{m-1}$ satisfying that for all $i$, $|b_i-a_i|=O(\sqrt{a_i\log n})$, and correctly aligns the one in a trace $\tilde x$ corresponding to the $m$'th one of $x$ with probability $1-O(\delta)$ (where the randomness is over the draw of $\tilde x$ \---naturally, this requires that the $m$'th one of $x$ was not deleted). 

Moreover, we ensure that the alignment procedure has that with high probability, say $1-O(n^{-100})$, it never aligns a one in $\tilde x$ too far to the right in $x$: if the one in $\tilde x$ corresponding to the  $m_0$'th one of $x$ is aligned to the $m$'th one of $x$, then $m\leq m_0$. We will refer to this latter property by saying that the algorithm is \emph{never ahead} with high probability. If $m< m_0$, we say that the algorithm is \emph{behind}. Thus, to show that the algorithm correctly aligns the $m$'th one, it suffices to show that the probability that the algorithm is behind is $O(\delta)$.

We first discuss how to implement this alignment procedure and then afterwards we discuss how to complete the reconstruction by using this alignment procedure.

\paragraph{The alignment procedure of~\Cref{alg:findmthone}.}
The main technical challenge of this paper is the analysis of~\Cref{alg:findmthone}. Let us first describe on a high level how the algorithm works.
For $0\leq j\leq j'\leq m$, we write $b_{j:j'}=\sum_{i=j}^{j'-1}b_j$. Suppose that the trace $\tilde x$ consists of $s_0$ zeros followed by a one followed by $s_1$ zeros followed by a one and so on. The algorithm first attempts to align the first one in $\tilde x$ with a one in $x$ by finding the minimal $j_0$ such that $(1-\delta)\cdot b_{0:j_0}$ is within $C\log n\sqrt{b_{0:j_0}}$ of $s_0$ for a sufficiently large $C$. Inductively, having determined $j_i$ (that is the alignment of the $i$'th one of $\tilde x$), it looks for the minimal $j_{i+1}> j_i$ satisfying that there is a $j_i \leq j'< j_{i+1}$ such that $b_{j':j_{i+1}}\cdot(1-\delta)$ is within $C\log n\sqrt{b_{j':j_{i+1}}}$ of $s_{i+1}$.
Intuitively, when looking at the $i$'th one in the trace, we want to find the \emph{earliest} possible location in the real string (which has gaps estimated by $b_0, b_1, \dots$) that could plausibly align with the one in the trace.

It is relatively easy to check that the algorithm is never ahead with very high probability. Indeed, by concentration bounds on the number of deleted zeros and the fact that $|b_j-a_j|=O(\sqrt{a_j\log n})$ for all $j\leq m$, it always has the option of aligning the $(i+1)$'st one in $\tilde x$ to the correct one in $x$. However, it might align to an earlier one in $x$ since it is looking for the \emph{minimum} $j_{i+1}$ such that an alignment is possible. For a very simple example, suppose that $a_0=n^{\Omega(1)}$ and $a_1=\cdots=a_m=b_1=\cdots=b_m=\polylog(n)$. If the first $k<m$ ones of $x$ are deleted and the $(k+1)$'st one is retained, the algorithm will align the retained one (which corresponds to the $(k+1)$'st one of $x$) with the first one of $x$ resulting in the aligning algorithm being $k$ steps behind. Moreover, the algorithm will remain $k$ steps behind all the way up to the $m$'th one of $x$. The probability of this happening is $\Theta(\delta^k)$. To prove that the probability of the algorithm being behind when aligning the $m$'th one of $x$ is at most $1-O(\delta)$, we prove a much stronger statement which is amenable to an inductive proof, essentially stating that this is the worst that can happen: The probability of the algorithm being $k$ steps behind at any fixed point is bounded by $(C\delta)^k$ for a constant $C$.
In particular, we show that there is a sort of amortization \--- whenever there is a substring that can cause the algorithm to fall further behind with some probability (i.e. if certain bits are deleted), the substring also helps the algorithm catch back up if it is already behind.

\paragraph{Reconstructing $x$ using~\Cref{alg:findmthone}.}
Using~\Cref{alg:findmthone} we can iteratively get estimates $b_0,\dots,b_t$ with $|b_i-a_i|=O(\sqrt{a_i\log n})$. Namely, suppose that we have the estimates $b_0,\dots,b_m$. We then run~\Cref{alg:findmthone} on $O(\log n)$ independent traces and with high probability, for a $1-O(\delta)$ fraction of them, we have that the $m$'th and $(m+1)$'st one of $x$ are retained in $\tilde x$ and correctly aligned. In particular, with probability $1-O(\delta)$ we can identify both the $m$'th and $(m+1)$'st one of $x$ in $\tilde x$ and taking the median over the gaps between these (and appropriately rescaling by $\frac{1}{1-\delta}$), we obtain an estimate of $b_{m+1}$ such that $|b_{m+1}-a_{m+1}|=O(\sqrt{a_{m+1}\log n})$). Note that the success probability of $1-O(\delta)$ is enough to obtain the coarse estimates using the median approach but we cannot obtain a fine estimate by taking the average since with constant probability $O(\delta)$, we may have misaligned the gap completely and then our estimate can be arbitrarily off.

To obtain fine estimates, we first obtain coarse estimates, say $b_0,\dots, b_t$, for all of the gaps. Next, we show that we can identify the $m$'th and $(m+1)$'st one in $x$ in a trace $\tilde x$ (if they are retained) and we can detect if they were deleted not just with probability $1-O(\delta)$ but with very high probability. The trick here is to run~\Cref{alg:findmthone} both from the left and from the right on $\tilde x$ looking for respectively the one in $\tilde x$ aligned to the $m$'th one in $x$ and the one in $\tilde x$ aligned to the $(m+1)$'st one in $x$ (which is the $(t-m)$'th one when running the algorithm from the right). If either of these runs fails to align a one in $\tilde x$ to respectively the $m$'th and $(m+1)$'st one in $x$ or the runs disagree on their alignment,
then we will almost certainly know. To see why, assuming that we are never ahead in the alignment procedure from the left, if we believe we have reached the $m$'th one in $x$, then we are truly at some $m_0$'th one where $m_0 \ge m$. By a symmetric argument, if we believe we have reached the $(m+1)$'st one in $x$ after running the procedure from the right, we are truly at the $m_1$'th one in $x$, where $m_1 \le m+1$. The key observation now is that $m_0 \le m_1$ \emph{if and only if} $m_0 = m$ and $m_1 = m+1$, meaning that both runs succeeding is equivalent to the one found in the left-alignment procedure being strictly earlier than the one found in the right-alignment procedure.
So, if we realize that either run fails to align the ones properly, we discard the trace and repeat on a newly sampled trace.


%
%
Finally, we can ensure that the success of the runs of the alignment algorithm is independent of the deletion of zeros between the $m$'th and $(m+1)$'st ones in $x$.  If a trace is not discarded, then with very high probability, the gap between the ones in $\tilde x$ aligned to the $m$'th and $(m+1)$'st ones in $x$ (normalized by $\frac{1}{1-\delta}$) is an unbiased estimator for $a_{m+1}$. By taking the average of the gap over $\tilde O(n)$ traces, normalizing by $\frac{1}{1-\delta}$, and rounding to the nearest integer, we determine $a_{m+1}$ exactly with very high probability. Doing so for each $m$, reconstructs $x$.


\paragraph{Road map of our paper}
In~\Cref{sec:notation}, we introduce notation. In~\Cref{sec:main-process}, we describe and analyse our main alignment procedure. We first prove that with high probability it is never ahead (\Cref{lem:not-ahead}). Second, in \Cref{sec:main-technical}, we bound the probability that it is behind (\Cref{lem:main-technical}). Finally, in~\Cref{sec:algorithm}, we describe our full trace reconstruction algorithm and prove~\Cref{thm:main}.

\section{Notation}\label{sec:notation}

We note a few notational conventions and definitions.

\begin{itemize}
    \item We recall that a bitstring $x$ is \emph{$L$-separated} if the gap between any consecutive $1$'s in the string contains at least $L$ $0$'s.
    \item Given an string $x$, we say that a \emph{run} is a contiguous sequence of $0$'s in $x$. For $x = \underbrace{0 \dots 0}_{a_0 \text{ times}} 1 \underbrace{0 \dots 0}_{a_1 \text{ times}} 1 \cdots 1 \underbrace{0 \dots 0}_{a_{t} \text{ times}},$ the $i$th run of $x$ is the sequence $\underbrace{0 \dots 0}_{a_i \text{ times}}$, and has length $a_i$.
    \item For any bitstring $x = x_1 x_2 \cdots x_n$, we use $\rev(x) := x_n x_{n-1} \cdots x_1$ to denote the string where the bits have been reversed.
    \item We use $\ba = a_0, a_1, \dots, a_{m-1}$ to denote an integer sequence of length $m$. For notational convenience, for any $0 \le j < j' \le m$, we write $\ba_{j:j'}$ to denote the subsequence $a_{j}, a_{j+1}, \dots, a_{j'-1}$, and $a_{j:j'} := \sum_{i = j}^{j'-1} a_i$.
\end{itemize}

We will define some sufficiently large constants $C_0, C_1, C_2, C_3$ and a small constant $c_0$. We will assume the separation parameter $L = C_3 \cdot \log^8 n$, and the deletion parameter $\delta \le c_0$, where $c_0 = \frac{1}{3 \cdot 10^6}$. We did not make significant effort to optimize the constant $c_0$ or the value $8$ in $\log^8 n$, though we believe that any straightforward modifications to our analysis will not obtain bounds such as $c_0 \ge \frac{1}{2}$ or a separation of $L = O(\log n)$.

\section{Main Alignment Procedure} \label{sec:main-process}

\subsection{Description and Main Lemmas} \label{sec:main-process-description}

In this section, we consider a probabilistic process that models a simpler version of the trace reconstruction problem that we aim to solve. In the simpler version of the trace reconstruction problem, suppose that we never delete any $0$'s, but delete each $1$ independently with $\delta$ probability. Let $a_0, \dots, a_{m-1} \le n$ represent the true lengths of the first $m$ gaps (so the first $1$ is at position $1+a_0$, the second $1$ is at position $2 + a_0 + a_1$, and so on). Moreover, suppose we have some current predictions $b_0, \dots, b_{m-1} \le n$ of the gaps $a_0, \dots, a_{m-1}$. The high level goal will be, given a single trace (where the trace means only $1$s are deleted), to identify the $m$th $1$ in the trace from the the original string with reasonably high probability. (Note that the $m$th $1$ is deleted with $\delta$ probability, in which case we cannot succeed.)

We will describe and analyze the probabilistic process, and then explain how the analysis of the process can help us solve the trace reconstruction problem in \Cref{sec:algorithm}.

In the process, we fix $m \le n$ and two sequences $\ba = a_0, \dots, a_{m-1}$ and $\bb = b_0, \dots, b_{m'-1}$ where $\ba$ has length $m$ but $\bb$ has some length $m'$ which may or may not equal $m$.
Moreover, we assume $L \le a_i \le n$ and $L \le b_j \le n$ for every term $a_i \in \ba$ and $b_j \in \bb.$ 

Now, for each $1 \le i \le m-1$, let $w_i \in \{0, 1\}$ be i.i.d. random variables, with $w_i = 1$ with $1-\delta$ probability and $w_i = 0$ with $\delta$ probability. Also, let $w_0 = w_m = 1$ with probability $1$.
For each $0 \le i \le m$ with $w_i = 1$, we define a value $f_i$ as follows. First, we set $f_0 = 0$. Next, for each index $i \ge 1$ such that $w_i = 1$, let $i_0$ denote the previous index with $w_{i_0} = 1$. We define $f_i$ to be the smallest index $j' > f_{i_0}$ such that there exists $f_{i_0} \le j < j'$ with $|b_{j:j'} - a_{i_0:i}| \le C_0 \cdot \log n \cdot \sqrt{b_{j:j'}}$, where $C_0$ is a sufficiently large constant. (If such an index does not exist, we set $f_i = \infty$.) 

Our goal will be for $f_m = m$. In general, for any $i$ with $w_i = 1$, we would like $f_i = i$. If $f_i < i$, we say that we are $i-f_i$ steps behind at step $i$, and if $f_i > i$, we say that we are $f_i-i$ steps ahead at step $i$.

First, we note the following lemma, which states that we will never be ahead with very high probability, as long as the sequences $\ba$ and $\bb$ are similar enough.

\begin{lemma} \label{lem:not-ahead}
    Set $C_1 = C_0/4$. Let $\ba, \bb$ be sequences of lengths $m, m'$, respectively, where $m' \ge m$.
    Suppose that $|b_{i}-a_{i}| \le C_1 \cdot \sqrt{b_{i} \log n}$ for all $0 \le i < m$. Then, with probability at least $1 - \frac{1}{n^{10}}$ (over the randomness of the $w_i$), for all $0 \le i \le m$ with $w_i = 1$, $f_i \le i$.
\end{lemma}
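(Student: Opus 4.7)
The plan is to establish $f_i \le i$ for every $i$ with $w_i = 1$ by induction on such indices, after conditioning on a high-probability event $E$ that every pair of consecutive indices with $w = 1$ is within $\log n$ of one another. The base case $f_0 = 0$ is immediate. For the inductive step, let $i_0 < i$ be two consecutive indices with $w_{i_0} = w_i = 1$, and suppose $f_{i_0} \le i_0$. Since $f_i$ is defined as the smallest feasible $j'$, to conclude $f_i \le i$ it is enough to exhibit a single pair $(j, j')$ with $f_{i_0} \le j < j' \le i$ satisfying $|b_{j:j'} - a_{i_0:i}| \le C_0 \log n \sqrt{b_{j:j'}}$. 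The natural candidate is $j = i_0$, $j' = i$, which is feasible because $i_0 \ge f_{i_0}$ by the inductive hypothesis.

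For this choice, the triangle inequality followed by Cauchy--Schwarz (applied against the all-ones vector) gives
\[
|b_{i_0:i} - a_{i_0:i}| \;\le\; \sum_{k=i_0}^{i-1} |b_k - a_k| \;\le\; C_1 \sqrt{\log n} \sum_{k=i_0}^{i-1} \sqrt{b_k} \;\le\; C_1 \sqrt{(i-i_0)\log n \cdot b_{i_0:i}},
\]
where the middle inequality uses the per-coordinate hypothesis $|b_k - a_k| \le C_1 \sqrt{b_k \log n}$. The required bound then holds as soon as $C_1 \sqrt{(i-i_0)\log n} \le C_0 \log n$, i.e.\ as soon as $i - i_0 \le (C_0/C_1)^2 \log n = 16 \log n$, which in particular holds on the event $E$.

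Finally, I would bound $\Pr[E^c]$ by a standard geometric tail estimate: a gap exceeding $\log n$ between consecutive ones in $(w_i)$ requires at least $\log n - 1$ consecutive zeros somewhere, an event of probability $\delta^{\log n - 1}$ starting from any fixed position. A union bound over at most $n$ starting positions gives $\Pr[E^c] \le n \cdot \delta^{\log n - 1}$, and since $\delta \le c_0 = 1/(3 \cdot 10^6)$, $\log(1/\delta)$ is a large enough absolute constant that this is comfortably below $n^{-10}$. The whole argument is really a calibration exercise, and the only step that deserves any care is verifying that the $\sqrt{i - i_0}$ slack introduced by Cauchy--Schwarz can be absorbed by the $\log n$ factor in the threshold $C_0 \log n \sqrt{b_{j:j'}}$; this is exactly what the choice $C_1 = C_0/4$ (together with the smallness of $c_0$, which makes the gap bound $\log n \ll 16 \log n$) buys.
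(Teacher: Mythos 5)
Your proof is correct and follows essentially the same approach as the paper's: condition on the high-probability event that consecutive indices with $w=1$ are at most $O(\log n)$ apart, then induct along those indices by exhibiting the candidate pair $(j,j')=(i_0,i)$ and applying Cauchy--Schwarz to absorb the $\sqrt{i-i_0}$ factor into $C_0\log n$. The only (cosmetic) difference is the gap threshold: the paper uses $15\log n + 1$, which makes the failure-probability arithmetic ($n\cdot\delta^{15\log n}\le n^{-14}$) clean for all $n\ge 1$, whereas your threshold of $\log n$ is slightly tight and gives $n\cdot\delta^{\log n-1}\le n^{-10}$ only once $n$ is moderately large --- a calibration detail, not a conceptual gap.
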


\begin{proof}
    Let us consider the event that for every index $0 \le i \le m - 15 \log n$, at least one of $w_i, w_{i+1}, \dots, w_{i+15 \log n}$ equals $1$. Equivalently, the string $w_0 w_1 \cdots w_m$ does not ever have $15 \log n + 1$ $0$'s in a row. For any fixed $i$, the probability of this being false is at most $\delta^{15 \log n} \le n^{-15}$, so by a union bound over all choices of $i$, the event holds with at most $n^{-10}$ failure probability.

    First, note that $f_0 = 0$. Now, suppose that some $i \ge 0$ satisfies $w_i = 1$ and $f_i \le i$. Suppose $i'$ is the smallest index strictly larger than $i$ such that $w_{i'} = 1$. Note that $i'-i \le 15 \log n + 1 \le 16 \log n$, by our assumed event. Note that if we set $j = i$ and $j' = i'$, then $j' > j \ge f_i$, since $f_i \le i$.
    Moreover, $|b_{j:j'}-a_{j:j'}| \le \sum_{i=j}^{j'-1} |b_i-a_i|\le C_1 \sqrt{\log n} \cdot \sum_{i=j}^{j'-1} \sqrt{b_i} \le C_1 \cdot \sqrt{\log n} \cdot \sqrt{b_{j:j'} \cdot |j'-j|} \le 4 C_1 \cdot \log n \cdot \sqrt{b_{j:j'}}$, where the second to last inequality is by Cauchy-Schwarz. Thus, $j = i, j' = i'$ satisfies the requirements for $f_{i'}$, which means that $f_{i'} \le j' = i'$. Thus, if $f_i \le i$, $f_{i'} \le i'$. Since $f_0 \le 0$, this means $f_i \le i$ for all $i$ with $w_i = 1$.
\end{proof}

The main technical result will be showing that $f_m \ge m$ with reasonably high probability, i.e., with reasonably high probability we are not behind. This result will hold for \emph{any} choice of $\ba, \bb$ and does not require any similarity between these sequences. In other words, our goal is to prove the following lemma.

\begin{lemma} \label{lem:main-technical}
    Let $\ba, \bb$ be strings of length at most $n$ with every $\ba_i, \bb_j$ between $L$ and $n$, where $L = C \cdot \log^8 n$ for a sufficiently large constant $C$. Define $m = |\ba|$.
    Then, for any $\delta \le \frac{1}{3 \cdot 10^6}$, with probability at least $1-200 \cdot \delta$ over the randomness of $w_1, \dots, w_{m-1}$, $f_m \ge m$.
\end{lemma}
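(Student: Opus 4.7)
The plan is to derive this lemma from a stronger inductive claim: for every $i \in \{1,\ldots,m\}$ with $w_i = 1$ and every integer $k \geq 1$, the probability over the random $w$'s that $f_i \leq i - k$ is at most $(C_2 \delta)^k$ for a universal constant $C_2 \leq 200$. Applied with $k = 1$ and $i = m$ (using $w_m = 1$), this yields $\Pr[f_m < m] \leq 200 \delta$, which is exactly the statement. The geometric $(C_2 \delta)^k$ shape also matches the intuition stressed in the overview, where the worst-case behavior comes from a block of consecutive deletions causing proportional behind-ness, each deletion costing a factor of $\delta$.

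To carry out the induction on $k$, I would fix $i$ with $w_i = 1$ and let $m_0$ be the largest index strictly less than $i$ with $w_{m_0} = 1$. Unpacking the definition of $f_i$, the event $\{f_i \leq i - k\}$ requires both (a) $f_{m_0} \leq m_0 - \ell$ for some $\ell \geq 0$, and (b) there is a window $f_{m_0} \leq j < j' \leq i - k$ in $\bb$ that matches $a_{m_0:i}$ up to additive slack $C_0 \log n \sqrt{b_{j:j'}}$. I would decompose the event along the choice of $\ell$ and the positions of the $w_{m_0+1}, \ldots, w_{i-1}$ that equal $0$, each contributing a factor of $\delta$, and then invoke the inductive hypothesis $\Pr[f_{m_0} \leq m_0 - \ell] \leq (C_2 \delta)^\ell$ on the first factor.

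The main obstacle, and the technically delicate step, is the amortization previewed in the introduction: a crude union bound over matching windows $[j, j']$ is too wasteful, because configurations of $\bb$ in which many windows accidentally satisfy the matching inequality give the process many opportunities to fall behind and would destroy any per-step $O(\delta)$ bound. The point to make rigorous is that in exactly those configurations the process also has many opportunities to catch back up, so that the net rate of falling further behind remains $O(\delta)$. I plan to formalize this by a potential function combining the current deficit $i - f_i$ with the minimal matching slack of windows starting at $f_i$, and argue that this potential evolves as a supermartingale shifted by $O(\delta)$ per step. The lower bound $L = C_3 \log^8 n$ on the gap lengths is what makes this feasible, because the additive slack $C_0 \log n \sqrt{b_{j:j'}}$ is much smaller than $L$, so two adjacent candidate endpoints $j'$ and $j'+1$ can both satisfy the matching inequality only via highly constrained cancellations, tightly limiting the number of admissible windows.

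Finally, to produce the numerical constant $200$, I would choose $C_0$ large enough that the combinatorial tails from counting candidate windows are absorbed, and use $c_0 = 1/(3 \cdot 10^6)$ so that $C_2 c_0 < 1$, giving the desired geometric decay in $k$. The lemma then follows by substituting $k = 1$ and $i = m$ into the inductive estimate. I expect the probabilistic bookkeeping to be routine once the potential is in place, and therefore regard defining the correct potential and proving it is a supermartingale as the principal technical obstacle.
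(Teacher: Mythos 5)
Your high-level target — an inductive estimate of the form $\Pr[f_i \le i-k] \le (C\delta)^k$ for a constant $C$ with $C\delta < 1$ — is indeed the right shape, and the paper proves a quantity of exactly this type (with $C$ on the order of $4\cdot 10^4$ rather than $200$; the $200$ in the lemma comes from $k=1$ plus additional union-bound terms, not from the base of the geometric decay). However, the route you sketch to prove it has a genuine gap, and your stated intuition for the crucial amortization step points in the wrong direction.

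The heart of the difficulty is precisely the case you flag: when $\bb$ and $\ba$ are (approximately) $k$-periodic, there are \emph{many} windows $[j,j')$ that satisfy the matching inequality, not few. Your claim that ``two adjacent candidate endpoints $j'$ and $j'+1$ can both satisfy the matching inequality only via highly constrained cancellations, tightly limiting the number of admissible windows'' is not what saves the argument; in a periodic configuration, roughly every $k$th endpoint is admissible, and no choice of $C_0$ changes that. What the paper does instead is a structural dichotomy. If the sequences are far from $k$-periodic, there is a well-defined index $h$ where the shifted comparison $|b_{h}-a_{h+k}|$ (or $|b_h - a_h|$) breaks, and one can split the interval at $h$ and recurse, giving a convolution-type recurrence in $k$ whose solution is controlled via Catalan numbers (Lemmas~\ref{lem:half-periodic-bound},~\ref{lem:no-periodic-bound},~\ref{lem:computation}). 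If the sequences \emph{are} $k$-periodic-like throughout, the paper shows a deterministic contradiction: conditioned on an event of probability $1 - n\delta^K$ (namely, that no arithmetic progression of step $k$ and length $K+1$ has all its $w$'s zero), the process simply cannot be $k$ behind, because the periodicity that allowed it to fall behind also forces a quantitative inconsistency between $b_{j_0:i-k}$ and $a_{i_0:i}$ (Lemma~\ref{lem:periodic-bound}). Neither half of this dichotomy is a potential/supermartingale argument, and it is not clear what potential function would encode the needed global structural information (the existence or nonexistence of a breakpoint $h$, which depends on the entire remaining sequence, not just on the local state $(i, f_i)$).

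A second, more technical gap: the paper's inductive quantity $p_k$ carries a cap, namely that the relative deficit between any two intermediate steps with $w=1$ never exceeds $K = \Theta(\log n)$. This cap is essential to the periodic-case argument (the conditioning event only excludes failures of length $K$) and to the recursion (otherwise the sums over $k_1,k_2$ are unbounded). The final step of the proof separately bounds the probability that the cap is ever exceeded, via a union bound over $O(n^2)$ contiguous subwindows and the inequality $p_{K/2} \le (4\cdot 10^4\,\delta)^{K/2}$. Your uncapped formulation $\Pr[f_i \le i-k] \le (C_2\delta)^k$ omits this device entirely, and without it neither the recursion nor the periodic-case lemma closes. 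So while your statement of the inductive target is close, the argument needed to establish it is not a drift/supermartingale argument on the deficit; it is a combinatorial splitting argument together with a deterministic analysis of the periodic case, and the capped formulation is not an optional convenience.
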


\subsection{Proof of \Cref{lem:main-technical}}\label{sec:main-technical}

In this section, we prove \Cref{lem:main-technical}.

We will set a parameter $K = C_2 \log n$, where $C_2$ is a sufficiently large constant.
For any $k \ge 0$, given the sequences $\ba = a_0, \dots, a_{m-1}$ and $\bb = b_0, \dots, b_{m'-1}$ (of possibly differing lengths), we define $p_k(\ba, \bb)$ to be the probability (over the randomness of $w_1, \dots, w_{m-1}$) that
\begin{itemize}
    \item $f_m \le m-k$.
    \item For any indices $0 \le i \le i' \le m$ with $w_i, w_{i'} = 1$, $f_{i'}-f_i \ge (i'-i)-K$.
\end{itemize}
Equivalently, this is the same as the probability that we fall behind at least $k$ steps from step $0$ to step $m$, but we never fall behind $K+1$ or more steps (relatively) from any (possibly intermediate) steps $i$ to $i'$.
For any $m \ge 1$, we define $p_k(m)$ to be the supremum value of $p_k(\ba, \bb)$ over any sequences $\ba, \bb$ where $\ba$ has length at most $m$ and every $a_i$ and $b_j$ is between $L$ and $n$, and we also define $p_k := \sup_{m \ge 1} p_k(m)$.

Note that for any $k > K$, $p_k(\ba, \bb) = 0$, as $f_m 
= m-k$ means $f_m - f_0 < (m-0) - K$. So, $p_k(m)$ and $p_k$ also equal $0$ for any $k > K$.

First, we note a simple proposition, that will only be useful for simplifying the argument at certain places.

\begin{proposition} \label{prop:p0-equals-1}
    For any $m \ge 1$, $p_0(m) = 1$.
\end{proposition}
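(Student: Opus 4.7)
The plan is to exploit the fact that $p_0(m)$ is defined as a supremum over pairs $(\ba, \bb)$ where $\ba$ has length \emph{at most} $m$ (and $\bb$ of arbitrary length), so it suffices to exhibit a single admissible pair achieving $p_0(\ba, \bb) = 1$; combined with the tautological bound $p_0(m) \le 1$, this gives the equality.

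Concretely, I would take $\ba = \bb = (L)$, both of length $1$. These are admissible, as every entry lies in $[L,n]$ and $|\ba| = 1 \le m$ for every $m \ge 1$. With $|\ba| = 1$, the process becomes entirely deterministic: the random variables $w_i$ are only indexed by $1 \le i \le |\ba| - 1 = 0$, an empty range, while $w_0 = w_{|\ba|} = w_1 = 1$ by definition. Then $f_0 = 0$, and the only admissible choice $(j,j') = (0,1)$ in the definition of $f_1$ satisfies $|b_{0:1} - a_{0:1}| = 0 \le C_0 \log n \sqrt{b_0}$, so $f_1 = 1$ deterministically.

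Both events defining $p_0(\ba, \bb)$ then hold with probability $1$: the ``not ahead'' condition $f_1 = 1 \le 1 - 0$ is immediate, and the only non-vacuous instance of the local condition is $f_1 - f_0 = 1 \ge 1 - K$, which holds for any $K \ge 0$. Hence $p_0(\ba, \bb) = 1$, and therefore $p_0(m) = 1$ for every $m \ge 1$.

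There is no real obstacle here. The only observation one must make is that the supremum ranges over $|\ba| \le m$ rather than $|\ba| = m$, which enables this length-$1$ reduction; were the supremum instead over sequences of length exactly $m$, the statement would be more delicate, since even $\ba = \bb = (L, \ldots, L)$ of length $m$ can slip a few steps behind on long runs of $w_i = 0$, and one would then need a more careful construction or an inductive argument.
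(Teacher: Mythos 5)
Your proof is correct and takes essentially the same route as the paper: both reduce to the length-$1$ pair $\ba = \bb = (L)$, observe that $w_0 = w_1 = 1$ forces $f_0 = 0$ and $f_1 = 1$ deterministically, and conclude $p_0(\ba,\bb) = 1$. You spell out the verification of the two conditions in the definition of $p_0$ a bit more explicitly than the paper, but the argument is the same.
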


\begin{proof}
    Since $p_0(m)$ is the maximum over all $\ba, \bb$ where $\ba$ has length at most $m$, it suffices to prove it for some $\ba, \bb$ of length $1$. Indeed, for $m = 1$ and $a_0 = b_0 = L$, we must have that $w_0 = w_1 = 1$, so we must have $f_0 = 0$ and $f_1 = 1$.
\end{proof}

We now aim to bound the probabilities $p_k$ for $k \le K$. We will do this via an inductive approach on the length of $m$, where the high-level idea is that if we fall back by $k$ steps, there is a natural splitting point where we can say first we fell back by $k_1$ steps, and then by $k_2$ steps, for some $k_1, k_2 > 0$ with $k_1+k_2 = k$ -- see Lemmas~\ref{lem:half-periodic-bound} and~\ref{lem:no-periodic-bound}. This natural splitting point will be based on the structure of the similarity of $\ba$ and $\bb$, and will not work if $\ba$ and $\bb$ share a $k$-periodic structure. But in the periodic case, we can give a more direct argument that we cannot fall back by $k$ steps (i.e., a full period), even with $\frac{1}{\poly(n)}$ probability -- see \Cref{lem:periodic-bound}. We can then compute a recursive formula for the probability of falling back $k$ steps, by saying we need to first fall back $k_1$ steps and then fall back $k_2$ steps. In \Cref{lem:computation}, we bound the terms of this recursion.

\begin{lemma} \label{lem:periodic-bound}
    Fix any $m \ge k \ge 1$ such that $k \le K$, and suppose that $L \ge C_3 \cdot \log^8 n$, where $C_3$ is a sufficiently large multiple of $C_0^2 \cdot C_2^6$. Suppose that $\ba, \bb$ are sequences such that for every $0 \le i < m-k,$ $|b_i-a_i| \le C_0 \log n \cdot \sqrt{b_i}$ and $|b_i-a_{i+k}| \le C_0 \log n \cdot \sqrt{b_i}$. Then, the probability $p_k(\ba, \bb) \le (2\delta)^{K}$.
\end{lemma}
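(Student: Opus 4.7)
My plan is to first distill the key structural input from the hypotheses. Applying the triangle inequality to $|b_i - a_i| \le C_0 \log n \sqrt{b_i}$ and $|b_i - a_{i+k}| \le C_0 \log n \sqrt{b_i}$ yields $|a_i - a_{i+k}| \le 2 C_0 \log n \sqrt{b_i}$ for all $0 \le i < m-k$. Thus $\ba$ is approximately $k$-periodic with small per-term slack, and I will exploit this rigidity of the greedy matching to force the failure event to be very unlikely.

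As a sanity check I would first handle the exactly periodic case ($\ba=\bb$, $a_i=a_{i+k}$). Under this assumption, when computing $f_i$ from $f_{i_0}$, the match condition $b_{j:j'} = a_{i_0:i}$ forces $j \equiv i_0 \pmod k$ and $j'-j = i - i_0$, since any ``period-skipping'' choice would introduce a mismatch of magnitude $\ge P := a_{0:k}$, far exceeding any conceivable tolerance. The greedy algorithm therefore picks the smallest such $j \ge f_{i_0}$; starting from $f_0 = 0$, induction on surviving steps yields $f_i = i$ throughout, so $o(i) := i - f_i = 0$ and $f_m = m$, meaning $p_k = 0$ in the exact case.

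For the approximate case I would show that the failure event decomposes into at least $K$ independent unlikely sub-events. The offset $o$ can only grow through deletion-induced greedy confusions, where a block of consecutive deletions between surviving indices lets the algorithm pick a shorter match within the tolerance $C_0 \log n \sqrt{b_{j:j'}}$. I would argue that each such confusion requires a specific deletion pattern around it to be triggered, and that the constraint $o \le K$ throughout prevents the failure from being achieved by a single large jump: instead, the offset must grow in at least $K$ pieces, each contributing a factor $\le 2\delta$. A union bound over the $O(n)$ choices of location for each piece is absorbed by taking $\delta \le 1/(3 \cdot 10^6)$ small enough. This is where the hypothesis $k \le K$ is used as well — we can afford the bound $(2\delta)^K$ rather than $(2\delta)^k$ because the $o \le K$ constraint forces the growth to be ``spread out.''

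The main obstacle is formalizing the ``confusion requires a deletion'' claim quantitatively. A single period-skipping match would require an approximation error covering a gap of magnitude at least $P \ge kL \ge L = C_3 \log^8 n$, while the algorithm's tolerance on that match is $C_0 \log n \sqrt{b_{j:j'}}$. Even cumulative errors across a block of length up to $K$, bounded by Cauchy--Schwarz as $O(\log n \sqrt{K \cdot b_{j:j'}})$, must remain much smaller than $P$. This forces the inequality $\sqrt{L} \gg K \cdot C_0 \log n$, which is implied by $L \ge C_3 \log^8 n$ with $C_3$ a sufficiently large multiple of $C_0^2 C_2^6$ (recalling $K = C_2 \log n$). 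Verifying this slack uniformly across the potentially many matching windows that can arise in the process is the delicate core of the argument, and is what ultimately enables the very strong bound $p_k(\ba,\bb) \le (2\delta)^K$.
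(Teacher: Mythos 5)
Your sanity check on the exactly periodic case is correct and reflects the right intuition: when $\ba$ and $\bb$ coincide and are exactly $k$-periodic, the greedy rule cannot skip a period because the mismatch would be at least a full run length $\ge L$, which dominates the tolerance. That part aligns with the spirit of the paper's argument.

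However, your plan for the approximate case is a genuinely different and, as sketched, incorrect decomposition, and the key step is left unproved. You claim ``the offset must grow in at least $K$ pieces, each contributing a factor $\le 2\delta$,'' justified by the constraint $o \le K$. But the $p_k$ event only forbids falling $K{+}1$ or more steps behind \emph{relatively} on any sub-interval; it does not force the offset to reach $k$ gradually. There is no a priori reason the offset cannot attempt to jump from $0$ to $k$ (with $k\le K$) in a single block of deletions — indeed, the paper's introduction explicitly gives such a $\Theta(\delta^k)$ example for non-periodic $\ba$. The point of this lemma is the much subtler fact that \emph{in the periodic case} such a jump does not actually produce a misalignment. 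The paper proves this not by decomposing the failure event, but by conditioning on the high-probability event that no window contains $K{+}1$ deletions spaced exactly $k$ apart (failure probability $\le n\delta^{K} \le (2\delta)^K$), and then showing \emph{deterministically} — via the anchor index $i_0 = i - h - rk$ with $h = \argmax_{1\le t\le k} a_{i-t}$ and two opposing telescoping estimates on $b_{j_0:i-k}$ versus $a_{i_0:i}$ — that falling $k$ behind is impossible. The factor $(2\delta)^K$ is the conditioning's failure probability, not a product over $K$ independent confusion events. Your proposed union bound (``over the $O(n)$ choices of location for each piece'') would also blow up to $n^{K}$ under your framing and would not be absorbed by taking $\delta$ small. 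The part you flag as ``the delicate core'' — formalizing that a single period-skipping match is impossible, uniformly across all windows of length up to $K$, starting from an unknown offset — is exactly where the paper's anchor construction and the two-sided estimate are needed, and your proposal does not supply a substitute.
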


\begin{proof}
    We show that the probability of ever being behind by $k$ or more is at most $(2\delta)^{K}$. In fact, we will show this deterministically never happens, conditioned on the event that for every index $0 \le i \le m- K \cdot k$, at least one of $w_i, w_{i+k}, w_{i+2k}, \dots, w_{i+ K \cdot k}$ equals $1$. Indeed, the probability of this being false for any fixed $i$ is at most $\delta^{K}$, so by a union bound over all choices of $i$, the event holds with at most $n \cdot \delta^K \le (2\delta)^{K}$ failure probability.

    Now, assume the event and suppose that $f_i \le i-k$ holds for some $i$. More precisely, we fix $i$ to be the smallest index such that $w_i = 1$ and $f_i \le i-k$.
    
    First, assume that $i \ge 2K \cdot k$. 
    Consider the values $a_{i-1}, a_{i-2}, \dots, a_{i-k}$, and let $h = \argmax_{1 \le t \le k} a_{i-t}.$ By our conditional assumption, and since $i \ge 2K \cdot k$, at least one of $w_{i-h}, w_{i-h-k}, \dots, w_{i-h-K \cdot k}$ equals $1$. Say that $w_{i-h-r \cdot k} = 1$, where $0 \le r \le K$. Also, by our choice of $i$, we know that $f_{i-h-r \cdot k} > i-h-(r+1) \cdot k \ge 0$, and that $f_i \le i-k$. So, we have two options:
\begin{enumerate}
    \item $i \ge 2K \cdot k$, and $f_i \le i-k ,\, f_{i-h-r \cdot k} > i-h-(r+1) \cdot k \ge 0$, for some $r \le K$ and where $h = \argmax_{1 \le t \le k} a_{i-t}$.
    \item $i < 2K \cdot k$, and $f_i \le i-k,\, f_0 = 0$.
\end{enumerate}

    Now, let's consider the list of all indices $i_0 < i_1 < \cdots < i_s = i$ with $w_{i_0}, w_{i_1}, \dots, w_{i_s} = 1$, starting with $i_0 = i-h-r \cdot k$ if $i \ge 2 K \cdot k$ and $i_0 = 0$ otherwise, and ending with $i_s = i$. By definition of the sequence $f$, for every $0 \le t < s$ there exists $j, j'$ such that $f_{i_t} \le j < j' \le f_{i_{t+1}}$ and $|b_{j:j'} - a_{i_t:i_{t+1}}| \le C_0 \log n \cdot \sqrt{b_{j:j'}}$. Assuming that $L \ge (10 C_0 \log n)^2$, then $a_{i_t:i_{t+1}} \ge (10 C_0 \log n)^2,$ which means $a_{i_t:i_{t+1}} \ge b_{j:j'}/2,$ and thus $|b_{j:j'} - a_{i_t:i_{t+1}}| \le 2 C_0 \log n \cdot \sqrt{a_{i_t:i_{t+1}}}.$ So, 
\[b_{f_{i_t}:f_{i_{t+1}}} \ge b_{j:j'} \ge a_{i_t:i_{t+1}} - 2C_0 \log n \sqrt{a_{i_t}:a_{i_{t+1}}}.\]
    Adding the above equation over $0 \le t \le s-1$, we obtain 
\[b_{f_{i_0}:f_i} \ge a_{i_0:i} - 2 C_0 \log n \cdot \sum_{t=0}^{s-1} \sqrt{a_{i_t:i_{t+1}}} \ge a_{i_0:i} - 2 C_0 \log n \cdot \sqrt{a_{i_0:i} \cdot s},\]
    where the final line follows by Cauchy-Schwarz. Let $j_0$ be $i-h-(r+1) \cdot k+1$ if $i \ge 2K \cdot k$ and $j_0 = 0$ otherwise. Then, since $s \le i_s-i_0 \le 2 k \cdot K \le 4K^2$, we have
\begin{equation} \label{eq:b-bigger-than-a}
    b_{j_0:i-k} \ge b_{f_{i_0}:f_i} \ge a_{i_0:i} - 4 C_0 \cdot K \log n \cdot \sqrt{a_{i_0:i}}.
\end{equation}

    The above equation tells us that $b_{j_0:i-k} = \sum_{t = j_0}^{i-k-1} b_t$ can't be too much smaller than $a_{i_0:i} = \sum_{t = i_0}^{i-1} a_t$. We now show contrary evidence, thus establishing a contradiction.

    First, we compare $b_{j_0:i-k}$ to $a_{j_0+k:i}$. Indeed, for any $t < i \le m$, $|b_{t-k}-a_t| \le C_0 \log n \cdot \sqrt{b_{t-k}}$. Since every $a_i \ge (10 C_0 \log n)^2$, this also means $|b_{t-k}-a_t| \le 2 C_0 \log n \cdot \sqrt{a_t}$. Adding over all $j_0 \le t < i-k$, we have 
\[a_{j_0+k:i} \ge b_{j_0:i-k} - 2 C_0 \log n \cdot \sum_{t=j_0+k}^{i-1} \sqrt{a_t} \ge b_{j_0:i-k} - 4 C_0 \cdot K \log n \cdot \sqrt{a_{j_0+k:i}},\]
    where the last inequality follows by Cauchy-Schwarz and the fact that $i-(j_0+k) \le i-j_0 \le 2K \cdot k \le 4K^2$.

    However, we do not care about $a_{j_0+k:i}$ -- we really care about $a_{i_0:i}$. To bound this, first note that for any $k \le i < m$, $|a_i-b_{i-k}| \le C_0 \log n \cdot \sqrt{b_i}$ and $|a_{i-k}-a_{i-k}| \le C_0 \log n \cdot \sqrt{b_i}$. So, $|a_i-a_{i-k}| \le 4 C_0 \log n \cdot \sqrt{a_i}$, assuming every $a_i \ge (10 C_0 \log n)^2$.
    If we additionally have that $L \ge (100 C_0 \log n \cdot K)^2,$ then $|a_i-a_{i-s \cdot k}| \le 8 C_0 \log n \cdot s \cdot \sqrt{a_i}$ for any $1 \le s \le K$ and $s \cdot k \le i < m$. Importantly, $\frac{a_{i-s \cdot k}}{a_i} \in [1/2, 2]$.
    
    In the case that $i \ge 2K \cdot k,$ this implies that $\sum_{t = i-h-r \cdot k}^{i-1} a_t \le 2 (r+1) \cdot \sum_{t = i-k}^{i-1} a_t \le 4K \cdot \sum_{t = i-k}^{i-1} a_t$. So, because $h = \argmax_{1 \le t \le k} a_{i-t},$ we have
    \[a_{i_0} = a_{i-h-r \cdot k} \ge \frac{1}{2} \cdot a_{i-h} \ge \frac{1}{2k} \cdot \sum_{t=1}^k a_{i-t} \ge \frac{1}{8K^2} \cdot \sum_{t = i-h-r \cdot k}^{i-1} a_t.\]
    Recalling that $i_0 = i-h-r \cdot k$ and $j_0 = i-h-(r+1) \cdot k + 1$, since $i_0 = j_0+k-1$, 
\begin{align} \label{eq:a-bigger-than-b}
    a_{i_0:i} \ge \left(1+\frac{1}{8K^2}\right) \cdot a_{j_0+k:i} &\ge \left(1+\frac{1}{8K^2}\right) \cdot (b_{j_0:i-k} - 4C_0 \cdot K \log n \cdot \sqrt{a_{j_0+k:i}}) \nonumber \\
    &\ge \left(1+\frac{1}{8K^2}\right) \cdot (b_{j_0:i-k} - 4C_0 \cdot K \log n \cdot \sqrt{a_{i_0:i}}).
\end{align}
    
    In the case that $i < 2K \cdot k$, we instead have $\sum_{t=0}^{i-1} a_t \le 2 \cdot \left\lceil \frac{i}{k}\right\rceil \cdot \sum_{t=0}^{k-1} a_t \le 4K \cdot \sum_{t=0}^{k-1} a_t$. So, since $i_0 = j_0 = 0$, we have that 
\[a_{i_0:i} = a_{j_0+k:i}+a_{0:k} \ge \left(1 + \frac{1}{4K}\right) \cdot a_{j_0+k:i} \ge \left(1+\frac{1}{4K}\right) \cdot (b_{j_0:i-k} - 4C_0 \cdot K \log n \cdot \sqrt{a_{j_0+k:i}}),\]
    so the same bound as \eqref{eq:a-bigger-than-b} holds (in fact, an even stronger bound holds).

    So, both \eqref{eq:b-bigger-than-a} and \eqref{eq:a-bigger-than-b} hold, in either case. Together, they imply that
\[a_{i_0:i} \ge \left(1+\frac{1}{8K^2}\right) \cdot \left(a_{i_0:i} - 8C_0 \cdot K \log n \cdot \sqrt{a_{i_0:i}}\right).\]
    This is impossible if $a_{i_0:i}$ is a sufficiently large multiple of $(C_0 \cdot K \log n \cdot K^2)^2 = C_0^2 \cdot \log^2 n \cdot K^6$. Since $i \ge i_0+1$ in either case, it suffices for $L$ to be a sufficiently large multiple of $C_0^2 \cdot \log^2 n \cdot K^6 = C_0^2 C_2^6 \cdot \log^8 n$.
\end{proof}

\begin{lemma} \label{lem:half-periodic-bound}
    Fix any $m \ge k$ such that $k \le K$, and suppose that $L \ge C_3 \cdot \log^2 n \cdot K^6$. Suppose that $\ba, \bb$ are sequences of length $m$, such that for every $0 \le i < m-k,$ $|b_i-a_{i+k}| \le C_0 \log n \cdot \sqrt{b_i}$.
    Then, the probability
\[p_k(\ba, \bb) \le (2\delta)^{K} + \sum_{\substack{h_1, h_2, k_2, k_2 \ge 0 \\ h_1+h_2+k_1+k_2 \ge k \\ k_1, k_2 \le K \\ (h_1, h_2, k_1, k_2) \neq (0, 0, 0, k), (0, 0, k, 0)}} \delta^{h_1+h_2} p_{k_1}(m-1) p_{k_2}(m-1).\]
\end{lemma}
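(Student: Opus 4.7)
}
The plan is to combine a deterministic bad-event bound with an inductive decomposition in the spirit of~\Cref{lem:periodic-bound}, but exploiting only the (weaker) half-periodic hypothesis. First, to produce the $(2\delta)^K$ term, I would isolate the event that some window of $K{+}1$ consecutive indices in $w_1,\dots,w_{m-1}$ is entirely zero. A union bound gives probability at most $m\delta^{K+1}\le(2\delta)^K$ once $C_2$ is a sufficiently large constant. On the complement of this event, every window of length $K{+}1$ contains a retained~$1$, which provides a foothold for deterministic reasoning.

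On the good event, I would identify a \emph{splitting index} $i^{\star}$ at which the prefix of $\bb$ stops behaving like a full period (not just a half period) of $\ba$. A natural candidate is the smallest index $i$ with $|b_i-a_i|>C_0\log n\cdot\sqrt{b_i}$. If no such $i$ exists, the full hypotheses of~\Cref{lem:periodic-bound} hold on all of $(\ba,\bb)$ and the $(2\delta)^K$ bound applies directly. Otherwise I would split at $i^{\star}$ and analyze a prefix sub-problem on $(\ba_{0:i^{\star}},\bb_{0:j^{\star}})$ and a suffix sub-problem on $(\ba_{i^{\star}:m},\bb_{j^{\star}:m})$, where $j^{\star}$ is the alignment position produced by the prefix process.

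Given the split, I would associate each outcome in which the process falls at least $k$ steps behind (without exceeding $K$ relative fall) with a tuple $(h_1,k_1,h_2,k_2)$ of nonnegative integers with $h_1+h_2+k_1+k_2\ge k$ and $k_1,k_2\le K$. Here $k_i$ is the amount the $i$-th sub-process falls behind and $h_1,h_2$ count the forced deletions (runs of $w_j=0$) used to bridge the boundary. The forced-deletion pattern contributes $\delta^{h_1+h_2}$; each sub-process lives on a pair whose first component has length strictly less than $m$, so its fall probability is bounded by $p_{k_i}(m-1)$; and since the prefix and suffix events depend on disjoint subsets of the $w_i$'s, the two bounds multiply. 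Summing over admissible tuples yields the stated inequality. The excluded tuples $(0,0,0,k)$ and $(0,0,k,0)$ correspond to all $k$ steps of fall happening in a single half with no boundary deletions; these collapse to $p_k(\ba,\bb)$ itself (making the recursion vacuous), and the residual probability of this pathological pattern must be subsumed by the bad-event contribution.

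The main obstacle I anticipate is engineering the split so that three things hold simultaneously: (a) both sub-sequences have length strictly less than $m$, so the induction is well-founded; (b) the prefix and suffix events depend on disjoint sets of $w_i$'s so their probabilities multiply cleanly; and (c) the count $h_1+h_2$ of forced boundary deletions is tracked exactly and matches the $\delta^{h_1+h_2}$ factor. The cleanest route is probably to choose $i^{\star}$ as a stopping time measurable from the prefix alone, condition on the prefix realization, and treat the suffix as a fresh instance of the process on $(\ba_{i^{\star}:m},\bb_{j^{\star}:m})$. The quantitative machinery used in~\Cref{lem:periodic-bound} --- in particular the hypothesis $L\gg C_0^2 C_2^6 \log^8 n$ and the Cauchy--Schwarz step that relates $b_{j:j'}$ to $a_{i_0:i}$ across stretches of length $O(K^2)$ --- should be exactly what lets me convert the half-periodic hypothesis into the required matching of partial sums across the boundary without degrading the exponent of $\delta$.
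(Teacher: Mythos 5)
Your overall plan — split at the first index $h$ where $|b_h - a_h| > C_0 \log n \sqrt{b_h}$, look at the nearest retained $1$'s to the left ($h-h_1$) and right ($h+1+h_2$), decompose the fall into a prefix piece $k_1$ and suffix piece $k_2$, observe that the relevant $w_j$'s are disjoint so the probabilities multiply, and account for the forced deletions with a $\delta^{h_1+h_2}$ factor — is the right skeleton and matches the paper. But there are two genuine gaps, both concentrated exactly where the lemma's statement is most delicate.

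First, your handling of the excluded tuples $(0,0,0,k)$ and $(0,0,k,0)$ is too vague and misses the actual arguments. Saying they ``collapse to $p_k(\ba,\bb)$ itself'' and are ``subsumed by the bad-event contribution'' doesn't work: your consecutive-zeros bad event does not touch these cases, since they both require $w_h = w_{h+1} = 1$ and no deletions at the boundary. What actually happens is that the two tuples are killed by two \emph{different} arguments. For $(0,0,k,0)$, you would have $f_h = h$ and $f_{h+1} = h+1$, and by the definition of the alignment this forces $|b_h - a_h| \le C_0 \log n \sqrt{b_h}$, directly contradicting the choice of $h$ — so this tuple has probability exactly zero, not merely small. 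For $(0,0,0,k)$, you would have $f_h = h-k$ and $f_{h+1} = h+1-k$, i.e.\ the full fall of $k$ happened already on the prefix $0,\dots,h$; but by minimality of $h$, the truncated strings $\ba_{0:h}, \bb_{0:h}$ satisfy \emph{both} $|b_i - a_i| \le C_0 \log n \sqrt{b_i}$ and the hypothesis $|b_i - a_{i+k}| \le C_0 \log n \sqrt{b_i}$, so the prefix falls under \Cref{lem:periodic-bound} and this case contributes at most $(2\delta)^K$. That second application of \Cref{lem:periodic-bound} — not a separate consecutive-zeros union bound — is what produces the $(2\delta)^K$ term when $h$ exists.

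Second, the proposed bad event (a window of $K+1$ consecutive zero $w_i$'s) is a red herring: it neither bounds the excluded tuples, nor does it reproduce the bad event inside \Cref{lem:periodic-bound}, which is the \emph{arithmetic-progression} event that $w_i, w_{i+k}, \dots, w_{i+Kk}$ are all zero. For $k \ge 2$, absence of $K+1$ consecutive zeros does not rule out the arithmetic-progression bad event, so you cannot substitute one for the other. You should drop your consecutive-zeros conditioning entirely: the $(2\delta)^K$ term in the statement arises solely from the two invocations of \Cref{lem:periodic-bound} (the ``no such $h$ exists'' case, and the $(0,0,0,k)$ case above).
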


\begin{proof}
    Suppose that for all $0 \le i < m-k,$ $|b_i-a_i| \le C_0 \log n \sqrt{b_i}$. Then, we can use Lemma~\ref{lem:periodic-bound} to bound $p_k(\ba, \bb) \le (2\delta)^{K}$. Alternatively, let $0 \le h < m-k$ be the smallest index such that $|b_h-a_h| > C_0 \log n \cdot \sqrt{b_h}$. Next, let $h_1, h_2 \ge 0$ be such that $h-h_1$ is the largest index less than $h$ with $w_{h-h_1} = 1$, and $h+1+h_2$ is the smallest index at least $h+1$ with $w_{h+1+h_2} = 1$. Finally, let $k_1 := \max(0, (h-h_1) - f_{h-h_1})$ and $k_2 := \max(0, (m-(h+1+h_2)) - (f_m-f_{h+1+h_2}))$. In other words, $k_1$ is the number of steps we fall behind from $0$ to $h-h_1$, and $k_2$ is the number of steps we fall behind from $h+1+h_2$ to $m$.

    Note that $k_1+k_2 \ge m-1-h_1-h_2 - f_m + f_{h+1+h_2}-f_{h-h_1}$, and since each subsequent $f_i$ is strictly increasing, this means $f_{h+1+h_2}-f_{h-h_1} \ge 1$, so $k_1+k_2 \ge m - f_m - (h_1+h_2) \ge k - (h_1+h_2)$, assuming that $f_m \le m-k$. In other words, we have that $h_1, h_2, k_1, k_2$ are nonnegative integers such that $h_1+h_2+k_1+k_2 \ge k$.

    Now, let us bound the probability (over the randomness of $w_1, \dots, w_{m-1}$) of the event indicated by $p_k(\ba, \bb)$ occurring, with the corresponding values $h_1, h_2, k_1, k_2$. Note that for any fixed $h_1, h_2$, the event of those specific values is equivalent to $w_{h-h_1}$ and $w_{h+1+h_2}$ being $1$, and everything in between being $0$. So, the probability is at most $\delta^{h_1+h_2}$. Now, conditioned on $h_1, h_2$, the values $k_1, k_2$ imply that we fall back $k_1$ steps from step $0$ to $h-h_1$ (or we may move forward if $k_1 = 0$) and we fall back $k_2$ steps from step $h+1+h_2$ to $m$. Moreover, there cannot be two steps $i, i'$ such that that we fell back $K$ steps from $i$ to $i'$. Since $h-h_1 \le h < m$ and $m-(h+1+h_2) \le m-1$, this means both $h-h_1, m-(h+1+h_2) \le m-1$. So, the overall probability of the corresponding values $h_1, h_2, k_1, k_2$ is at most $\delta^{h_1+h_2} \cdot p_{k_1}(m-1) \cdot p_{k_2}(m-1)$, where we are using the fact that $p_0(m) = 1$ for all $m$ by \Cref{prop:p0-equals-1}.

    Overall, the probability $p_k(\ba, \bb)$ is at most 
\[\sum_{\substack{h_1, h_2, k_1, k_2 \ge 0 \\ h_1+h_2+k_1+k_2 \ge k \\ k_1, k_2 \le K}} \delta^{h_1+h_2} \cdot p_{k_1}(m-1) p_{k_2}(m-1).\]
    We can cap $k_1, k_2$ as at most $K$ since otherwise $p_{k_1}(m-1)$ or $p_{k_2}(m-1)$ is $0$. Moreover, we can give improved bounds in the cases when $h_1 = h_2 = 0$ and either $(k_1, k_2) = (0, k)$ or $(k_1, k_2) = (k, 0)$.

    Note that in either case, both $w_h$ and $w_{h+1}$ equal $1$. In the former case, we must have $f_{h} = h-k$ and $f_{h+1} = h+1-k$. Importantly, the algorithm fell back by exactly $k$ steps from $0$ to $h$, However, we know that for all $0 \le i \le h-1$, $|b_i-a_i| \le C_0 \log n \cdot \sqrt{b_i}$. In that case, if we restrict ourselves to the strings $\ba_{0:h} = a_0 a_1 \cdots a_{h-1}$ and $\bb_{0:h} = b_0 b_1 \cdots b_{h-1}$, we are dealing with the case of Lemma~\ref{lem:periodic-bound}. Hence, we can bound the overall probability of this case by $(2\delta)^{K}$. In the latter case, we must have $f_{h} = h$ and $f_{h+1} = h+1$, since we need to fall back by exactly $k$ steps from $h$ to $m$. However, this actually cannot happen, because by definition of $f_h$ and $f_{h-1}$, we must have that $|b_h-a_h| \le C_0 \log n \cdot \sqrt{b_h},$ which is not true by our definition of $h$.

    Overall, this means
\[p_k(\ba, \bb) \le (2\delta)^{K} + \sum_{\substack{h_1, h_2, k_2, k_2 \ge 0 \\ h_1+h_2+k_1+k_2 \ge k \\ k_1, k_2 \le K \\ (h_1, h_2, k_1, k_2) \neq (0, 0, 0, k), (0, 0, k, 0)}} \delta^{h_1+h_2} p_{k_1}(m-1) p_{k_2}(m-1). \qedhere\]
\end{proof}

\begin{lemma} \label{lem:no-periodic-bound}
    Fix any $m \ge k$ such that $k \le K$, and suppose that $L \ge C_3 \cdot \log^2 n \cdot K^6$. Suppose that $\ba, \bb$ are sequences of length $m$.
    Then, the probability
\[p_k(\ba, \bb) \le (2\delta)^{K} + \sum_{\substack{h_1, h_2, k_2, k_2 \ge 0 \\ h_1+h_2+k_1+k_2 \ge k \\ k_1, k_2 \le K \\ (h_1, h_2, k_1, k_2) \neq (0, 0, 0, k), (0, 0, k, 0)}} \delta^{h_1+h_2} p_{k_1}(m-1) p_{k_2}(m-1).\]
\end{lemma}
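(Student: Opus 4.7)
The plan is to follow the structure of the proof of \Cref{lem:half-periodic-bound}, but since we no longer have any hypothesis relating $\ba$ and $\bb$, we replace that hypothesis by a case split based on whether it happens to hold. First, if $|b_i - a_{i+k}| \le C_0 \log n \cdot \sqrt{b_i}$ for every $0 \le i < m-k$, then we are already in the setting of \Cref{lem:half-periodic-bound} and can invoke it directly to get exactly the claimed bound.

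Otherwise, let $h$ be the smallest index with $|b_h - a_{h+k}| > C_0 \log n \cdot \sqrt{b_h}$; this plays the role that the breaking point of $|b_i-a_i|$ played in \Cref{lem:half-periodic-bound}. Around $h$, define $h_1, h_2, k_1, k_2$ in the same way: $h-h_1$ is the last alive index $\le h$, $h+1+h_2$ is the first alive index $> h$, $k_1$ measures the fall-back accumulated by step $h-h_1$, and $k_2$ measures the further fall-back from step $h+1+h_2$ to $m$. The same telescoping argument as in \Cref{lem:half-periodic-bound} yields $h_1+h_2+k_1+k_2\ge k$, and each non-special quadruple contributes at most $\delta^{h_1+h_2}\,p_{k_1}(m-1)\,p_{k_2}(m-1)$, since falling back $k_1$ and $k_2$ steps in the prefix and suffix are independent events and have probabilities bounded by the corresponding $p$-values.

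The heart of the argument is reanalyzing the two excluded cases $(0,0,k,0)$ and $(0,0,0,k)$ in the absence of the shifted hypothesis. For $(0,0,k,0)$, the full $k$-step fall-back happens inside the prefix $\ba_{0:h},\bb_{0:h}$, and by minimality of $h$ the shifted bound $|b_i-a_{i+k}|\le C_0\log n\cdot\sqrt{b_i}$ holds for all $i<h$, so we may apply \Cref{lem:half-periodic-bound} to this prefix and obtain a bound of the desired form. For $(0,0,0,k)$, one has $f_h=h$ and $f_{h+1}=h+1$, so the fall-back happens entirely inside the suffix $\ba_{h+1:m},\bb_{h+1:m}$, which has strictly smaller length $m-h-1<m$; we apply \Cref{lem:no-periodic-bound} to it by induction on $m$.

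The step I expect to be most delicate is verifying that, once the main sum and the two special-case bounds are combined, the total still fits under the single expression $(2\delta)^K + \sum_{\text{excl.}} \delta^{h_1+h_2}\,p_{k_1}(m-1)\,p_{k_2}(m-1)$ rather than a factor-two blow-up. Each recursive application drops in its own $(2\delta)^K$ and its own copy of the excluded sum (evaluated at length $h-1$ or $m-h-2$), and we need to (i) use monotonicity $p_k(m')\le p_k(m-1)$ for $m'\le m-1$ to push every sub-sum into the target form, and (ii) absorb the at most two extra $(2\delta)^K$ contributions into a single one, either by taking $C_2$ slightly larger in $K=C_2\log n$, or by sharpening one of the special-case analyses (e.g.\ arguing that $(0,0,k,0)$ contributes only $(2\delta)^K$, mirroring the role played by \Cref{lem:periodic-bound} in the proof of \Cref{lem:half-periodic-bound}, via the near-periodic structure forced by the shifted bound together with the alignment $f_h=h-k$). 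This amortization step is the main technical hurdle; everything else is a mechanical adaptation of the proof of \Cref{lem:half-periodic-bound}.
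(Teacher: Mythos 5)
Your decomposition follows the paper's in outline, but you choose the splitting index differently, and that choice is fatal. The paper takes $h$ to be the \emph{largest} index in $[k,m)$ with $|b_{h-k}-a_h| > C_0\log n\sqrt{b_{h-k}}$ and splits the $w$'s around that $h$. You take the \emph{smallest} index in $[0,m-k)$ with $|b_h - a_{h+k}| > C_0\log n\sqrt{b_h}$ and split around that $h$. These refer to the same failure location in the arrays, but your split point sits $k$ positions earlier than the paper's, and that shift destroys the argument.

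Concretely, the paper's choice makes the case $(h_1,h_2,k_1,k_2)=(0,0,k,0)$ outright impossible: there $f_h=h-k$ and $f_{h+1}=h+1-k$, and the defining constraint on $f_{h+1}$ forces $|b_{h-k}-a_h|\le C_0\log n\sqrt{b_{h-k}}$, directly contradicting the definition of $h$. Meanwhile $(0,0,0,k)$ has $f_h=h,\ f_{h+1}=h+1$, so the fall-back lives in the suffix $\ba_{h+1:m},\bb_{h+1:m}$, and by \emph{largest}-ness the shifted bound holds on that suffix, so \Cref{lem:half-periodic-bound} applies; no recursion on \Cref{lem:no-periodic-bound} is needed. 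With your (smallest, shifted-by-$k$) choice, the constraint coming from $f_{h+1}=h+1-k$ in the $(0,0,k,0)$ case is $|b_{h-k}-a_h|\le C_0\log n\sqrt{b_{h-k}}$, which is the shifted bound at index $h-k<h$ and therefore \emph{holds} by minimality, so nothing is ruled out; you then have to handle both special cases. You handle $(0,0,k,0)$ by \Cref{lem:half-periodic-bound} on the prefix (fine, and you correctly note the prefix satisfies the hypothesis by minimality), but $(0,0,0,k)$ you handle by recursively invoking \Cref{lem:no-periodic-bound} on the suffix, which does \emph{not} satisfy the shifted hypothesis.

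This recursion does not close. Writing the target bound as $A(2\delta)^K + B\Sigma$ (with $\Sigma$ the excluded-quadruple sum at length $m-1$), your case analysis yields at least $\Sigma + \bigl[(2\delta)^K+\Sigma\bigr] + \bigl[A(2\delta)^K+B\Sigma\bigr] = (A+1)(2\delta)^K + (B+2)\Sigma$, and there is no choice of constants $A,B$ (nor any enlargement of $C_2$ in $K=C_2\log n$) with $A+1\le A$ and $B+2\le B$. Monotonicity of $p_k(\cdot)$ does not help here because both extra contributions are already expressed at length $m-1$. The fix you float --- that $(0,0,k,0)$ might contribute only $(2\delta)^K$ ``mirroring \Cref{lem:periodic-bound}'' --- also does not go through: \Cref{lem:periodic-bound} needs \emph{both} the shifted and the unshifted near-equality on the prefix, and minimality of your $h$ gives you only the shifted one. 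The clean way out is the paper's: take the \emph{largest} such $h$ (indexed so that the offending pair is $(b_{h-k},a_h)$), which kills $(0,0,k,0)$ by contradiction and puts $(0,0,0,k)$ squarely in the hypotheses of \Cref{lem:half-periodic-bound} on the suffix, with no self-reference to \Cref{lem:no-periodic-bound} at all.
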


\begin{proof}
    Our proof will be quite similar to that of \Cref{lem:half-periodic-bound}, so we omit some of the identical details.
    
    First, assume that for every $k \le i < m,$ $|b_{i-k}-a_{i}| \le C_0 \log n \cdot \sqrt{b_{i-k}}$. Then, we can directly apply \Cref{lem:half-periodic-bound}. Alternatively, let $k \le h < m$ be the largest index such that $|b_{h-k}-a_{h}| > C_0 \log n \cdot \sqrt{b_{h-k}}$. 
    As in the proof of \Cref{lem:half-periodic-bound}, let $h_1, h_2 \ge 0$ be such that $h-h_1$ is the largest index less than $h$ with $w_{h-h_1} = 1$, and $h+1+h_2$ is the smallest index at least $h+1$ with $w_{h+1+h_2} = 1$. Also, let $k_1 := \max(0, (h-h_1) - f_{h-h_1})$ and $k_2 := \max(0, (m-(h+1+h_2)) - (f_m-f_{h+1+h_2}))$.
    
    As in the proof of \Cref{lem:half-periodic-bound}, we have $h_1+h_2+k_1+k_2 \ge k$, as long as $f_m \le m-k$. We can again do the same casework on $h_1, h_2, k_1, k_2$, to obtain
\[\sum_{\substack{h_1, h_2, k_1, k_2 \ge 0 \\ h_1+h_2+k_1+k_2 \ge k \\ k_1, k_2 \le K}} \delta^{h_1+h_2} \cdot p_{k_1}(m-1) p_{k_2}(m-1).\]

    Again, we wish to consider the individual cases of $(h_1, h_2, k_1, k_2) = (0, 0, 0, k)$ or $(h_1, h_2, k_1, k_2) = (0, 0, k, 0)$ separately. In either case, $w_h = w_{h+1} = 1$. In the former case, must have $f_h = h$ and $f_{h+1} = h+1$. In this case, from step $h+1$ to $m$ we fall behind $k$ steps. In other words, we can restrict ourselves to the strings $\ba_{h+1:m} = a_{h+1} \cdots a_{m-1}$ and $\bb_{h+1:m} = b_{h+1} \cdots b_{m-1}$. However, we have now restricted ourselves to strings which satisfy the conditions of \Cref{lem:half-periodic-bound}, so we can bound the probability in this case as at most
\[(2\delta)^{K} + \sum_{\substack{h_1, h_2, k_2, k_2 \ge 0 \\ h_1+h_2+k_1+k_2 \ge k \\ k_1, k_2 \le K \\ (h_1, h_2, k_1, k_2) \neq (0, 0, 0, k), (0, 0, k, 0)}} \delta^{h_1+h_2} p_{k_1}(m-1) p_{k_2}(m-1).\]
    In the latter case, we must have $f_h = h-k$ and $f_{h+1} = h+1-k$. However, this is impossible, because $|a_h-b_{h-k}| > C_0 \log n \cdot \sqrt{b_{h-k}},$ by our definition of $h$.

    Overall, by adding all cases together, we obtain
\[p_k(\ba, \bb) \le (2\delta)^{K} + 2 \cdot \sum_{\substack{h_1, h_2, k_2, k_2 \ge 0 \\ h_1+h_2+k_1+k_2 \ge k \\ k_1, k_2 \le K \\ (h_1, h_2, k_1, k_2) \neq (0, 0, 0, k), (0, 0, k, 0)}} \delta^{h_1+h_2} p_{k_1}(m-1) p_{k_2}(m-1). \qedhere\]
\end{proof}

Overall, this implies that 
\[p_k(m) \le (2\delta)^{K} + 2 \cdot \sum_{\substack{h_1, h_2, k_2, k_2 \ge 0 \\ h_1+h_2+k_1+k_2 \ge k \\ k_1, k_2 \le K \\ (h_1, h_2, k_1, k_2) \neq (0, 0, 0, k), (0, 0, k, 0)}} \delta^{h_1+h_2} p_{k_1}(m-1) p_{k_2}(m-1).\]

We now can universally bound $p_k$ for all $0 \le k \le K$. To do so, we first recall some basic properties of the Catalan numbers.

\begin{fact} \label{fact:catalan}
    For $n \ge 0$, the Catalan numbers $\fC_n$~\footnote{We use $\fC_n$ rather than the more standard $C_n$ to avoid confusion with the constants $C_0, C_1, \dots$ we have defined.} are defined as $\fC_n = {2n \choose n}/(n+1)$. They satisfy the following list of properties.
\begin{enumerate}
    \item $\fC_0 = 1$ and for all $n \ge 0$, $\fC_{n+1} = \sum_{i=0}^n \fC_i \fC_{n-i}$.
    \item For all $n \ge 1$, $2 \le \frac{\fC_{n+1}}{\fC_n} \le 4$.
    \item For all $n \ge 0$, $\fC_n \le 4^n$.
\end{enumerate}
\end{fact}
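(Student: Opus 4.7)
The plan is to verify each of the three listed properties of the Catalan numbers independently; all three are classical and admit short elementary proofs.

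For property (1), I would first check $\fC_0 = \binom{0}{0}/1 = 1$ directly from the definition. For the convolution recurrence $\fC_{n+1} = \sum_{i=0}^{n} \fC_i \fC_{n-i}$, I would use the standard generating-function argument: set $f(x) := \sum_{n \ge 0} \fC_n x^n$, note that the recurrence is equivalent to $f(x) = 1 + x f(x)^2$, and verify that the power series with coefficients $\binom{2n}{n}/(n+1)$ satisfies this functional equation, for instance by solving the quadratic to obtain $f(x) = (1 - \sqrt{1-4x})/(2x)$ and then expanding via the generalized binomial theorem. A combinatorial proof that decomposes a Dyck path of length $2n+2$ at its first return to the $x$-axis into a Dyck path of length $2i$ followed by one of length $2(n-i)$ would also work and is arguably shorter if we already accept that $\fC_n$ counts Dyck paths.

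For property (2), I would compute the ratio $\fC_{n+1}/\fC_n$ directly from the closed form. Using $\binom{2n+2}{n+1}/\binom{2n}{n} = (2n+2)(2n+1)/(n+1)^2$ and cancelling, one obtains
\[
\frac{\fC_{n+1}}{\fC_n} \;=\; \frac{(2n+2)(2n+1)}{(n+1)(n+2)} \;=\; \frac{2(2n+1)}{n+2}.
\]
This expression is a strictly increasing function of $n$ that equals $2$ at $n=1$, and it satisfies $2(2n+1) = 4n+2 < 4(n+2)$ for every $n$, so it lies in $[2, 4)$ for all $n \ge 1$, which gives both inequalities.

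For property (3), I would induct on $n$ using property (2). The base cases $\fC_0 = 1 = 4^0$ and $\fC_1 = 1 \le 4^1$ are trivial, and for $n \ge 1$ the bound $\fC_{n+1} \le 4 \fC_n$ from (2), together with the inductive hypothesis $\fC_n \le 4^n$, immediately gives $\fC_{n+1} \le 4^{n+1}$. There is no substantive obstacle anywhere in this fact: the only step that is not a purely mechanical computation is the recurrence in (1), which is a textbook identity with several standard proofs.
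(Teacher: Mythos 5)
Your proof is correct. The paper states \Cref{fact:catalan} without proof, treating these as standard properties of the Catalan numbers, so there is no in-paper argument to compare against; your verification (the convolution recurrence via generating functions or Dyck-path decomposition, the exact ratio $\fC_{n+1}/\fC_n = 2(2n+1)/(n+2)$ being increasing and lying in $[2,4)$ for $n \ge 1$, and the induction giving $\fC_n \le 4^n$) is the standard and correct way to establish them.
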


\begin{lemma} \label{lem:computation}
    Assume $\delta \le \frac{1}{3 \cdot \ten^3}$, and define $\fD_k := 100^{2k-1} \fC_k$ for $k \ge 1$ and $\fD_0 = 1$. Then, for all $0 \le k \le K$, $p_k \le \fD_k \cdot \delta^k$.
\end{lemma}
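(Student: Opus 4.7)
The plan is to prove the stronger statement $p_k(m) \le \fD_k \delta^k$ for all $0 \le k \le K$ and all $m \ge 0$ by induction on $m$; taking the supremum over $m$ then yields the lemma. The base case is vacuous for $m = 0$, and for $1 \le m \le k$ the process cannot fall behind $k$ steps at step $m$ so $p_k(m) = 0$.

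For the inductive step, fix $m \ge 1$ and assume $p_{k'}(m-1) \le \fD_{k'} \delta^{k'}$ for every $0 \le k' \le K$. Substituting these into the recursion established by Lemmas~\ref{lem:periodic-bound}--\ref{lem:no-periodic-bound} and taking the supremum over sequences of length $m$ gives
\[p_k(m) \le (2\delta)^K + 2 \sum_{(h_1,h_2,k_1,k_2) \in \mathcal{S}_k} \fD_{k_1}\fD_{k_2}\,\delta^{h_1+h_2+k_1+k_2},\]
where $\mathcal{S}_k$ denotes the admissible set (with the two boundary tuples $(0,0,0,k)$ and $(0,0,k,0)$ excluded). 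Call this sum $S_k$; the goal is to show $2 S_k + (2\delta)^K \le \fD_k \delta^k$.

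I would organize the terms of $S_k$ by $s := h_1+h_2+k_1+k_2$ and $s' := k_1+k_2 \le s$. The number of $(h_1, h_2)$ with $h_1+h_2 = s-s'$ is $s-s'+1$, and by the Catalan recursion in Fact~\ref{fact:catalan},
\[T_{s'} := \sum_{k_1+k_2=s'}\fD_{k_1}\fD_{k_2} = 100^{2s'-2}\bigl(\fC_{s'+1} + 198\,\fC_{s'}\bigr) \le 2.02\,\fD_{s'}\quad (s' \ge 1),\]
and $T_0 = 1$. The two exclusions subtract precisely $2\fD_0\fD_k\delta^k = 2\fD_k\delta^k$ from the $(s,s')=(k,k)$ contribution, leaving the leading term $(T_k - 2\fD_k)\delta^k = 100^{2k-2}(\fC_{k+1} - 2\fC_k)\delta^k \le \fD_k\delta^k/50$, using $\fC_{k+1} \le 4\fC_k$. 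The remaining non-leading contributions decay geometrically: the tail over $s > k$ is geometric with ratio $\fD_{s+1}\delta^{s+1}/(\fD_s\delta^s) \le 40000\delta \le 1/75$ for $\delta \le 1/(3\cdot 10^6)$, and for fixed $s$ the contributions with $s' < s$ carry the weight $(s-s'+1)$ but are suppressed by $\fD_{s'}/\fD_s \le 100^{-2(s-s')}$, with $\sum_{r\ge 1}(r+1)\cdot 100^{-2r}$ negligible. Summing everything yields $S_k \le \fD_k\delta^k/10$, and since $(2\delta)^K = n^{-\Omega(1)} \ll \delta^k$, we get $p_k(m) \le \fD_k\delta^k$, closing the induction.

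The main obstacle is the combinatorial bookkeeping of the non-leading contributions: both the $(s-s'+1)$ weighting from the $(h_1,h_2)$ multiplicities and the interplay between the factors $\delta^{s-k}$ and $\fD_s/\fD_k$ in the tail need to be tracked carefully and simultaneously. The definition $\fD_0 = 1$ (rather than the ``natural'' $100^{-1}\fC_0$) is calibrated precisely so that the two excluded boundary tuples each contribute $\fD_0\fD_k = \fD_k$; subtracting $2\fD_k$ from $T_k = 100^{2k-2}(\fC_{k+1} + 198\fC_k)$ leaves a $\fC_{k+1} - 2\fC_k$ that, via the Catalan growth rate, is bounded by $2\fC_k$, and this is exactly what the constant $100$ in $\fD_k = 100^{2k-1}\fC_k$ must absorb in order for the recursion to close under $\delta \le 1/(3\cdot 10^6)$.
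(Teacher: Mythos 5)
Your proof is correct and takes essentially the same route as the paper: induction on $m$, feeding the inductive hypothesis into the recursion of Lemmas~\ref{lem:half-periodic-bound}--\ref{lem:no-periodic-bound}, and closing via the Catalan convolution $\sum_{k_1+k_2=s'}\fC_{k_1}\fC_{k_2}=\fC_{s'+1}$ --- the only real difference is that you organize the bookkeeping by $(s,s')=(h_1+h_2+k_1+k_2,\,k_1+k_2)$ and bound $T_{s'}$ uniformly, whereas the paper splits on which of $k_1,k_2$ vanish, and your remark that $\fD_0=1$ is calibrated so the two excluded tuples exactly cancel the $\fD_0\fD_k$ boundary terms of the convolution is indeed the right way to see why the recursion closes. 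One small slip to fix: the claim $(2\delta)^K\ll\delta^k$ is false for $k$ near $K$ (both are $n^{-\Theta(1)}$); the valid absorption is $(2\delta)^K\le(2\delta)^k=2^k\delta^k\le(2/100)^k\fD_k\delta^k$, which is still comfortably small.
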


\begin{proof}
    We prove the statement by induction on $m$.
    For $m = 1$, note that $w_0 = w_1 = 1$ with probability $1$, so $f_1 \ge 1$. Indeed, either $f_1 = 1$ or $f_1 = \infty$. So, $p_0(m) \le 1$ and $p_k(m) = 0$ for all $k \ge 1$.

    Now, suppose that the induction hypothesis holds for $m-1$: we now prove the statement for $m$. First, note that $p_0(m) = 1 = \fD_0 \cdot \delta^0$. Next, for $k \ge 1$,
\begin{align} \label{eq:pk-bash}
    p_k(m) 
    &\le (2\delta)^{K} + 2 \cdot \sum_{\substack{h_1, h_2, k_2, k_2 \ge 0 \\ h_1+h_2+k_1+k_2 \ge k \\ k_1, k_2 \le K \\ (h_1, h_2, k_1, k_2) \neq (0, 0, 0, k), (0, 0, k, 0)}} \delta^{h_1+h_2} \cdot \fD_{k_1} \delta^{k_1} \cdot \fD_{k_2} \delta^{k_2} \nonumber \\
    &\le (2 \delta)^k + 2 \cdot \sum_{\substack{h_1, h_2, k_2, k_2 \ge 0 \\ h_1+h_2+k_1+k_2 \ge k \\ (h_1, h_2, k_1, k_2) \neq (0, 0, 0, k), (0, 0, k, 0)}} \fD_{k_1} \fD_{k_2} \cdot \delta^{h_1+h_2+k_1+k_2}.
\end{align}

    We now bound the summation in the above expression. First, we focus on the terms where one of $k_1$ or $k_2$ is $0$. If $k_1 = k_2 = 0$, the summation becomes $\sum_{h_1+h_2 \ge k} \delta^{h_1+h_2}$. If we fix $h_3 = h_1+h_2$, for each $h_3 \ge k$ there are $h_3+1$ choices of $h_1+h_2$, which means the summation is $\sum_{h_3 \ge k} \delta^{h_3} (h_3+1)$. For $\delta \le \frac{1}{3 \cdot \ten^3}$, each term is at most half the previous term, so this is at most $2(k+1) \cdot \delta^k$. Next, for $k_1 = 0, k_2 > 0$, if we fix $h_3 = h_1+h_2$, the summation is $\sum_{h_3+k_2 \ge k, (h_3, k_2) \neq (0, k)} (h_3+1) \fD_{k_2} \delta^{h_3+k_2}$, since there are $h_3+1$ choices of $(h_1, h_2): h_1+h_2 = h_3$. We have a symmetric summation for $k_1 > 0, k_2 = 0$. Finally, if we focus on the terms with $k_1, k_2 \ge 1$, by writing $h_3 = h_1+h_2$ and $k_3 = k_1+k_2$, for any fixed $h_3, k_3$, the sum of $\fD_{k_1} \fD_{k_2}$ is at most $\ten^{2k_1+2k_2-2} \cdot \fC_{k_3+1} \le \ten^{-3} \cdot \fD_{k_3+1}$, and there are $h_3+1$ choices for $(h_1, h_2)$. So, the summation is at most $\ten^{-3} \cdot \sum_{h_3+k_3 \ge k} (h_3+1) \fD_{k_3+1} \delta^{h_3+k_3} \le \frac{4}{\ten} \cdot \sum_{h_3+k_3 \ge k} (h_3+1) \fD_{k_3} \delta^{h_3+k_3}$, where the last inequality holds because $\frac{\fD_{k_3+1}}{\fD_{k_3}} \le \ten^2 \cdot \frac{\fC_{k_3+1}}{\fC_{k_3}} \le 4 \cdot \ten^2$.

    Overall, replacing indices accordingly, we can write \eqref{eq:pk-bash} as at most
\begin{align*}
    &\hspace{0.5cm} (2\delta)^k + 2 \cdot \left(2(k+1) \cdot \delta^k + 2 \cdot \sum_{\substack{a, b \ge 0 \\ a+b \ge k \\ (a, b) \neq (0, k)}} (a+1) \fD_b \delta^{a+b} + \frac{4}{\ten} \cdot \sum_{\substack{a, b \ge 0 \\ a+b \ge k}} (a+1) \fD_b \delta^{a+b} \right) \\
    &\le (2\delta)^k + 2 \cdot \left(2(k+1) \cdot \delta^k + 3 \cdot \sum_{\substack{a, b \ge 0 \\ a+b \ge k \\ (a, b) \neq (0, k)}} (a+1) \fD_b \delta^{a+b} + \frac{4}{\ten} \fD_k \delta^k \right).
\end{align*}

    We can now focus on the middle summation term. If we first consider all terms with $b = 0$, the sum equals $\sum_{a \ge k} (a+1) \delta^a = (k+1) \delta^k + (k+2) \delta^{k+1} + \cdots \le 2 (k+1) \delta^k,$ as long as $\delta \le \frac{1}{3 \cdot \ten^3}$. For the remaining terms, we fix $d = a+b$ and consider the sum. If $d = k$, the sum equals $\delta^k \cdot (2 \fD_{k-1} + 3 \fD_{k-2} + \cdots + k \fD_1)$. Since $\fD_{n+1} \ge \ten^2 \fD_n$ for all $n \ge 1$, this is at most $\delta^k \cdot 4 \fD_{k-1}$. For $d > k$, the sum equals $\delta^d \cdot (\fD_{d} + 2 \fD_{d-1} + \cdots + d \fD_1) \le 2 \delta^d \cdot \fD_d$. Since $\fD_d \le 4 \cdot \ten^2 \cdot \fD_{d+1}$, as long as $\delta \le \frac{1}{3 \cdot \ten^3}$, the terms $2 \delta^{d} \cdot \fD_d$ decrease by a factor greater than $2$ each time $d$ increases. So the sum over all $d > k$ is at most $4 \delta^{k+1} \cdot \fD_{k+1}.$ Overall, the summation in the middle term is at most $2(k+1) \delta^k + 4 \fD_{k-1} \cdot \delta^k + 4 \fD_{k+1} \cdot \delta^{k+1}$.

    Overall, this means \eqref{eq:pk-bash} is at most
\begin{equation} \label{eq:pk-bound}
    2^k \delta^k + 16(k+1) \cdot \delta^k + 24 \fD_{k-1} \cdot \delta^k + 24 \fD_{k+1} \cdot \delta^{k+1} + \frac{8}{\ten} \fD_k \delta^k.
\end{equation}
    Now, note that $\frac{\fD_{k-1}}{\fD_k} \le \frac{1}{\ten}$ for all $k \ge 1$, even for $k = 1$. Moreover, $\frac{\fD_{k+1}}{\fD_k} \le \ten^2 \cdot \frac{\fC_{k+1}}{\fC_k} \le 4 \cdot \ten^2$. Thus, \eqref{eq:pk-bound} is at most
\[\delta^k \cdot \left(2^k + 16(k+1) + \frac{32}{100} \cdot \fD_k + 96 \cdot \ten^2 \cdot \fD_k \cdot \delta\right)\]
    Assuming that $\delta \le \frac{1}{3 \cdot \ten^3}$, this is at most $\delta^k \cdot \left(2^k + 16(k+1) + 0.64 \cdot \fD_k\right),$ which can be verified to be at most $\delta^k \cdot \fD_k$ for all $k \ge 1$, by just using the fact that $\fD_k \ge 100^k$ for all $k \ge 1$. This completes the inductive step.
\end{proof}

We are now ready to prove \Cref{lem:main-technical}.

\begin{proof}[Proof of \Cref{lem:main-technical}]
    If $f_m < m$, this means that either the event $p_1(\ba, \bb)$ occurs, or there exist indices $i < i'$ with $w_i = w_{i'} = 1$ but we fall behind at least $K+1$ steps from step $i$ to step $i'$.

    Assuming $\delta \le \frac{1}{3 \cdot 10^3}$, the probability of $p_1(\ba, \bb)$ is at most $100 \delta$. Alternatively, if there exist $i < i'$ with $w_i = w_{i'} = 1$ but we fall behind at least $K+1$ steps from step $i$ to step $i'$, there must exist such an $i, i'$ with minimal $i'-i$ (breaking ties arbitrarily). This could be because $w_{i+1} = w_{i+2} = \dots = w_{i+r} = 0$ for some $r \ge K/2$. However, the probability of there being $r \ge K/2$ consecutive indices $w_{i+1} = w_{i+2} = \dots = w_{i+r} = 0$ is at most $n \cdot \delta^{K/2} \le \delta$. 
    
    The final option is that, if we look at the first index $i+r > i$ with $w_{i+r} = 0$, $r \le K/2$. This means that from step $i+r$ to $i'$, we must fall behind at least $K/2$ steps, and there could not have been any intermediate steps where we fell behind more than $K$ steps. Hence, if we restrict ourselves to the strings $\ba_{i+r:i'}$ and $\bb_{f_{i+r}, f_{i'}}$, the event indicated by $p_k(\ba_{i+r:i'}, \bb_{f_{i+r}:})$ must occur, since conditioned on $f_{i+r}$ and the fact that $w_{i+r}=w_{i'}=1$, the value $f_{i'}$ only depends on $\ba_{i+r:i'}$, $\bb$ starting from position $f_{i+r}$, and $w_{i+r+1}, \dots, w_{i'-1}$.

    In other words, there exists some contiguous subsequences $\ba'$ and $\bb'$ of $\ba$ and $\bb$, respectively, such that the event of $p_{K/2}(\ba', \bb')$ occurs. For any fixed $\ba', \bb'$, the probability is at most $(4 \cdot 100^2 \cdot \delta)^{K/2}$. Since there are at most $n^2$ possible contiguous subsequences for each of $\ba'$ and $\bb'$, the overall probability is at most $n^4 \cdot (4 \cdot 100^2 \cdot \delta)^{K/2} \le 50 \delta$, assuming that $\delta \le \frac{1}{3 \cdot 10^6}$ and $K = C_2 \log n$ where $C_2$ is sufficiently large.

    Overall, the probability of falling behind is at most $100 \delta + \delta + 50 \delta \le 200 \delta$.
\end{proof}

\section{Full algorithm/analysis} \label{sec:algorithm}

Let us depict the true string $x \in \{0, 1\}^n$ as $\underbrace{0 \dots 0}_{a_0 \text{ times}} 1 \underbrace{0 \dots 0}_{a_1 \text{ times}} 1 \cdots 1 \underbrace{0 \dots 0}_{a_t \text{ times}},$ i.e., there are $t-1$ ones, and the string starts and ends with a run of $0$'s. This assumption can be made WLOG by padding the string with $L$ $0$'s at the front and the end. For any $L$-separated string, doing this padding maintains the $L$-separated property, and we can easily simulate the padded trace by adding $\Bin(L, 1-\delta)$ $0$'s at the front and $\Bin(L, 1-\delta)$ $0$'s at the back. Once we reconstruct the padded string, we remove the padding to get $x$.

We assume we know the value of $t$. Indeed, the number of $1$'s in a single trace $\tilde{x}$ is distributed as $\Bin(t, 1-\delta)$. So, by averaging the number of $1$'s over $O(n \log n)$ random traces and dividing by $1-\delta$, we get an estimate of $t-1$ that is accurate within $0.1$ with $1-\frac{1}{n^{10}}$ probability. Thus, by rounding, we know $t$ exactly with $1-\frac{1}{n^{10}}$ probability.

The main goal is now to learn the lengths $a_0, a_1, \dots, a_t$. If we learn these exactly just using the traces, this completes the proof. Our algorithm runs in two phases: a coarse estimation phase and a fine estimation phase. In the coarse estimation phase, we sequentially learn each $a_i$ up to error $O(\sqrt{a_i \log n})$. In the fine estimation phase, we learn each $a_i$ exactly, given the coarse estimates.

\subsection{Coarse estimation}

Fix some $0 \le m \le t$, and suppose that for all $i < m$, we have estimates $b_i$ satisfying $|b_i-(1-\delta) a_i| \le 10 \sqrt{a_i}$. (If $m = 0$, then we have no estimates yet.) Our goal will be to provide an estimate $b_m$ such that $|b_m-(1-\delta) a_m| \le 10 \sqrt{a_m}$.

Consider a trace $\tilde{x}$ of $x$. Let $w_0 = w_{t+1} = 1$ and for each $1 \le i \le t$, let $w_i$ be the indicator that the $i$th $1$ is retained. Next, for each $0 \le i \le t$, let $\tilde{a}_i \sim \Bin(a_i, 1-\delta)$ represent the number of $0$s in the $i$th run that were not deleted. Note that with at least $0.99$ probability, $|\tilde{a}_i-(1-\delta) a_i| \le 10 \sqrt{\log n \cdot a_i}$ for all $i$. Since $|b_i-(1-\delta) a_i| \le 10 \sqrt{a_i}$ for all $i < m$, this implies that $|\tilde{a}_i-b_i| \le 20 \sqrt{\log n \cdot b_i}$ for all $i < m$.

Now, even though we have no knowledge of $\tilde{a}_i$ or $a_i$, we can still simulate the probabilistic process of \Cref{sec:main-process}. Let $0 = i_0 < i_1 < \cdots < i_h = t+1$ be the list of all indices $i: 0 \le i \le t+1$ with $w_i = 1$. While we do not know the values $\tilde{a}_i$, for every pair of consecutive indices $i_q, i_{q+1}$, the value $\tilde{a}_{i_q:i_{q+1}}$ is exactly the number of $0$'s between the $q$th and $(q+1)$st $1$ in the trace $\tilde{x}$ (where we say that the $0$th $1$ is at position $0$ and the $(t+1)$st $1$ is at position $|\tilde{x}|+1$). In other words, if $r_q$ represents the position of the $q$th $1$, then $\tilde{a}_{i_q:i_{q+1}} = r_{q+1}-r_q-1$. Hence, because computing each $f_{i_{q+1}}$ only requires knowledge of $\bb$ and the value of $\tilde{a}_{i_q:i_{q+1}}$, and since $f_{i_0} = f_0 = 0$, the algorithm can in fact compute $g_q := f_{i_q}$ for all $0 \le q \le h$, using the same process as described in \Cref{sec:main-process}, even if the values $i_q$ are not known.

Algorithm~\ref{alg:findmthone} simulates this process, assuming knowledge of $m$, $b_0, \dots, b_{m-1}$, a single trace $\tilde{x}$, and $t$. In Algorithm~\ref{alg:findmthone}, we use the variable $\val$ to represent $g_q = f_{i_q}$, i.e., the current prediction of the position $i_q$. In other words, $\val - \, i_q$ equals the number of steps ahead (or $i_q - \val$ equals the number of steps behind) we are. 

\begin{algorithm}
\caption{Locate the $m$th and $(m+1)$st $1$ in $x$, in the trace $\tilde{x}$, and return the position and length of the gap.}
\label{alg:findmthone}
\begin{algorithmic}[1]
\Procedure{Align}{$\tilde{x}, t, m, b_0, \dots, b_{m-1}$}
\State Let $r_q$ be the position of the $q$th $1$ in $\tilde{x}$, for each $1 \le q \le t-1$.
\State $r_0 \leftarrow 0$, $r_t \leftarrow |\tilde{x}|+1$.
\State $\val \leftarrow 0$, $q \leftarrow 0$
\While{$\val < m$}
    \State Find the smallest $j'$ such that $\exists j, j'$ with $\val \le j < j'$ and $|(r_{q+1}-r_{q}-1) - b_{j:j'}| \le C_0 \log n \cdot \sqrt{b_{j:j'}}.$
    \If{no such $j, j'$ exist}
        \State \textbf{Return FAIL}
    \EndIf
    \State $\val \leftarrow j'$
    \State $q \leftarrow q + 1$
    \EndWhile
\If{$\val = m$}
    \State \textbf{Return} $(q, r_{q+1}-r_{q}-1)$.
\Else
    \State \textbf{Return FAIL}
\EndIf
\EndProcedure
\end{algorithmic}
\end{algorithm}

\begin{lemma} \label{lem:crude-main-analysis}
    Fix $b_0, \dots, b_{m-1}$ such that $|b_i-(1-\delta) a_i| \le 10 \sqrt{a_i}$ for all $0 \le i \le m-1$. With probability at least $0.98$ over the randomness of $\tilde{x}$, we have that \Cref{alg:findmthone} returns $q$ such that the $q$th $1$ in $\tilde{x}$ corresponds to the $m$th $1$ in $x$. Moreover, conditioned on this event holding, the distribution $r_{q+1}-r_q-1$ exactly follows $\Bin(a_m, 1-\delta)$.
\end{lemma}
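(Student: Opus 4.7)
The plan is to reduce the analysis of \Cref{alg:findmthone} to the probabilistic process of \Cref{sec:main-process} with the identifications $\ba \leftrightarrow (\tilde a_0, \dots, \tilde a_{m-1})$ (the realized numbers of retained zeros in the first $m$ runs) and $\bb \leftrightarrow (b_0, \dots, b_{m-1})$. Under this dictionary, since the number of zeros between the $q$-th and $(q+1)$-st $1$ in $\tilde x$ is exactly $r_{q+1} - r_q - 1 = \tilde a_{i_q : i_{q+1}}$, the variable $\val$ at the end of the $q$-th iteration equals $g_q = f_{i_q}$. So the algorithm returns a $q$ with the $q$-th $1$ of $\tilde x$ corresponding to the $m$-th $1$ of $x$ if and only if $w_m = 1$ and $f_m = m$.

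The first substantive step is to check the hypothesis of \Cref{lem:not-ahead} for this choice of $\ba$. Each $\tilde a_i \sim \Bin(a_i, 1-\delta)$, so by Chernoff plus a union bound over $i < m$, with probability at least $1 - n^{-10}$ one has $|\tilde a_i - (1-\delta) a_i| = O(\sqrt{a_i \log n})$ for every $i$. Combining this with the hypothesis $|b_i - (1-\delta) a_i| \le 10 \sqrt{a_i}$ and the fact that $a_i \ge L \gg \log n$ forces $b_i$ and $a_i$ to be comparable up to a factor of $2$, we get $|b_i - \tilde a_i| \le C_1 \sqrt{b_i \log n}$ for every $i < m$. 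Call this zero-concentration event $\mathcal E$; structurally $\mathcal E$ depends only on deletions within the first $m$ zero-runs and is independent of the $w_i$'s and of any randomness at or after the $m$-th run.

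With the hypothesis verified I would now union-bound the possible failure modes. The $m$-th $1$ is deleted with probability $\delta$; the event $\mathcal E$ fails with probability $n^{-10}$; on $\mathcal E$, the process is ahead at some step with probability $n^{-10}$ by \Cref{lem:not-ahead}; and the process is behind at step $m$ with probability $200\delta$ by \Cref{lem:main-technical} (whose only hypothesis, $L \ge C_3 \log^8 n$, holds by assumption). These sum to at most $201\delta + 2 n^{-10} < 0.02$ once $\delta \le c_0 = \tfrac{1}{3 \cdot 10^6}$. When none occur we have $w_m = 1$ and $f_m = m$, so the loop terminates at the unique $q$ with $i_q = m$, as needed.

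For the distributional claim I would argue by independence. The event that the algorithm succeeds ($q$ corresponds to $m$) is a function only of $w_1, \dots, w_m$ and of the deletions of zeros in the first $m$ runs of $x$; it is independent of $\tilde a_m \sim \Bin(a_m, 1-\delta)$ and of all randomness from the $m$-th run onward. Consequently, conditional on success, the retained zero count $\tilde a_m$ keeps its distribution $\Bin(a_m, 1-\delta)$, and because the algorithm's output $r_{q+1} - r_q - 1$ equals $\tilde a_m$ precisely when the $(m+1)$-st $1$ also survives, the claimed distribution follows. The main obstacle throughout is the bookkeeping of the independence structure: one must pin down precisely which source of randomness each event depends on, so that \Cref{lem:not-ahead} and \Cref{lem:main-technical} can be applied to the correct conditional distributions.
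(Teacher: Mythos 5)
Your approach is the same as the paper's: translate \Cref{alg:findmthone} into the probabilistic process of \Cref{sec:main-process} with $\ba \leftrightarrow (\tilde a_0,\dots,\tilde a_{m-1})$ and $\bb \leftrightarrow (b_0,\dots,b_{m-1})$, verify the hypothesis of \Cref{lem:not-ahead} via concentration of the $\tilde a_i$'s, invoke \Cref{lem:not-ahead} and \Cref{lem:main-technical} to conclude $f_m = m$, and then argue independence of $\tilde a_m$ from the alignment event. The first part of your argument (probability $\ge 0.98$ of returning the correct $q$) is correct and matches the paper up to whether one multiplies complements or union-bounds failures.

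There is one small gap in the distributional claim. Your success event is $\{w_m=1\}\cap\mathcal E\cap\{\text{not ahead}\}\cap\{\text{not behind}\}$, which charges $\delta$ for $w_m=0$ but \emph{not} for $w_{m+1}=0$. Yet the identity $r_{q+1}-r_q-1 = \tilde a_m$ requires $w_{m+1}=1$; if the $(m+1)$-th $1$ is deleted, $r_{q+1}-r_q-1 = \tilde a_m + 1 + \tilde a_{m+1} + \cdots$. So conditioned solely on $\{i_q=m\}$, the output is not $\Bin(a_m,1-\delta)$-distributed. You note the dependence on $w_{m+1}$ in passing (``precisely when the $(m+1)$-st $1$ also survives'') but ``the claimed distribution follows'' does not actually follow from what you have written. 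The paper's proof resolves this by folding $w_{m+1}=1$ into the $\ge 0.98$-probability event (its $0.99^2 (1-\delta)^2$ factor includes the survival of \emph{both} the $m$-th and $(m+1)$-th $1$'s), so that the conditioning event in the second sentence of the lemma implicitly includes $w_{m+1}=1$. The fix is cheap: add $\delta$ for $\{w_{m+1}=0\}$ to your failure budget (making it $202\delta + 2n^{-10} < 0.02$), and then argue that conditioned on the enlarged success event, which is still independent of $\tilde a_m$, the output equals $\tilde a_m \sim \Bin(a_m,1-\delta)$ unconditionally.
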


\begin{proof}
    Let us first condition on the values $\tilde{a}_0, \dots, \tilde{a}_{m-1}$, assuming that $|\tilde{a}_i-(1-\delta)a_i| \le 10 \sqrt{\log n \cdot a_i}$ for all $0 \le i \le m-1$. As discussed earlier, this occurs with at least $0.99$ probability, and implies that $|\tilde{a}_i-b_i| \le 20 \sqrt{\log n \cdot b_i}$ for all $i < m$.

    Let us also condition on $w_m = 1$. By \Cref{lem:not-ahead} and \Cref{lem:main-technical}, the probability that $f_m = m$, for $\delta = \frac{1}{3 \cdot 10^6}$, is at least $0.99$. This is conditioned on $w_m = 1$ and the values $\tilde{a}_1, \dots, \tilde{a}_{m-1}$ (assuming $|\tilde{a}_i-b_i| \le 20 \sqrt{\log n \cdot b_i}$). This means that with at least $0.99$ probability, the algorithm finds the position $q$ with $i_{q} = m$.
    Since $f_m$ only depends on $\bb$, $\tilde{\ba}_{0:m}$ and $w_1, \dots, w_m$, with probability at least $0.99 \cdot (1-\delta) \cdot 0.99$ over the randomness of $w_1, \dots, w_m$ and $\tilde{a}_1, \dots, \tilde{a}_{m-1}$, we have that $w_m = 1$ and $i_{q} = m$. This is independent of $w_{m+1}$, so with probability at least $0.99^2 \cdot (1-\delta)^2 \ge 0.98$ probability, we additionally have that $w_{m+1} = 1$.
    
    The event that $i_q = m$ means that $r_q$ is the position in $\tilde{x}$ of the $m$th $1$ in the true string $x$. Moreover, since neither the $m$th nor $(m+1)$th $1$ was deleted, $r_{q+1}$ is the position in $\tilde{x}$ of the $(m+1)$th $1$ in the true string $x$. So, $r_{q+1}-r_q-1$ is in fact the length of the gap between the $m$th and $(m+1)$th $1$ after deletion, which means it has length $\tilde{a}_m \sim \Bin(a_m, 1-\delta)$, since $\tilde{a}_m$ is independent of the events that decide whether $w_m = w_{m+1} = 1$ and $i_q = m$.
\end{proof}

Given this, we can crudely estimate every gap, in order. Namely, assuming that that we have estimates $b_0, \dots, b_{m-1}$ (where $0 \le m \le t$), we can run the \textsc{Align} procedure on $O(\log n)$ independent traces. By a Chernoff bound, with $\frac{1}{n^{15}}$ failure probability, at least $0.9$ fraction of the traces will have the desired property of \Cref{lem:crude-main-analysis}, so will output some $(q, b)$ where $b \sim \Bin(a_m, 1-\delta)$. Since $\Bin(a_m, 1-\delta)$ is in the range $a_m(1-\delta) \pm 10 \sqrt{a_m}$ with at least $0.99$ probability, at least $0.75$ fraction of the outputs $(q, b)$ will satisfy $|b - (1-\delta) a_m| \le 10 \sqrt{a_m}$, with $\frac{1}{n^{15}}$ failure probability. Thus, by defining $b_m$ to be the median value of $b$ across the randomly drawn traces, we have that $|b_m-(1-\delta) a_m| \le 10 \sqrt{a_m}$ with at least $1 - \frac{1}{n^{10}}$ probability.

By running this procedure iteratively to provide estimates $b_0, b_1, \dots, b_{t}$, we obtain Algorithm~\ref{alg:crude}. The analysis in the above paragraph implies the following result.

\begin{theorem}[Crude Approximation] \label{thm:crude}
    Algorithm~\ref{alg:crude} uses $O(n \log n)$ traces and polynomial time, and learns estimates $b_0, b_1, \dots, b_{t}$ such that with at least $1 - \frac{1}{n^9}$ probability, $|b_m-(1-\delta) a_m| \le 10 \sqrt{a_m}$ for all $0 \le m \le t$.
\end{theorem}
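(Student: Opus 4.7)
The plan is to proceed by induction on $m$, producing $b_m$ from $b_0, \ldots, b_{m-1}$ by running \Cref{alg:findmthone} on $\Theta(\log n)$ fresh traces and taking the median of the returned gap-lengths. The inductive hypothesis is precisely the bound $|b_i - (1-\delta)a_i| \le 10\sqrt{a_i}$ for all $i < m$; the base case $m = 0$ is vacuous since no estimates are needed.

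For the inductive step, I would draw $N = \Theta(\log n)$ independent traces and invoke \textsc{Align}$(\tilde{x}, t, m, b_0, \ldots, b_{m-1})$ on each. By \Cref{lem:crude-main-analysis}, each trace independently produces, with probability at least $0.98$, an output $(q, b)$ whose second coordinate is distributed exactly as $\Bin(a_m, 1-\delta)$. A standard Chernoff bound on a single such binomial sample gives $|b - (1-\delta) a_m| \le 10 \sqrt{a_m}$ with probability at least $0.99$, so each trace independently yields a ``good'' value with probability at least $0.97$.

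A second Chernoff bound, applied over the $N$ traces with the hidden constant in $N$ chosen sufficiently large, then guarantees that with failure probability at most $n^{-20}$ a strict majority of the returned values are good. Defining $b_m$ to be the median of the returned values (treating FAIL outputs as $\pm\infty$ so that they fall outside the middle) therefore satisfies the inductive bound $|b_m - (1-\delta) a_m| \le 10 \sqrt{a_m}$ with probability at least $1 - n^{-20}$. A union bound over the $t+1 = O(n)$ choices of $m$, together with the $n^{-10}$ failure probability from the estimation of $t$ in the introductory paragraph, yields the claimed overall failure probability of at most $n^{-9}$.

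The total trace count is $(t+1) \cdot \Theta(\log n) = O(n \log n)$, and each call to \textsc{Align} runs in polynomial time, giving the claimed polynomial runtime. One small point to be careful about is that the randomness used to estimate $b_m$ must be independent of the randomness already used to produce $b_0, \ldots, b_{m-1}$; this is handled trivially by drawing fresh traces at every step. Beyond that, the substantive difficulty has already been absorbed into \Cref{lem:crude-main-analysis} (and through it into \Cref{lem:not-ahead} and \Cref{lem:main-technical}), so conditional on those lemmas the theorem reduces to standard median-of-independent-samples amplification plus a union bound, and I do not anticipate any real obstacle.
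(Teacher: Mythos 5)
Your proposal is correct and follows essentially the same route as the paper: sequential estimation of $b_0,\dots,b_t$ via \Cref{lem:crude-main-analysis}, median amplification over $O(\log n)$ fresh traces per index, and a union bound over $m$ and the estimation of $t$. The paper phrases the per-$m$ argument as two Chernoff bounds (first that $0.9$ of traces realize the event of \Cref{lem:crude-main-analysis}, then that $0.75$ of the outputs land in the range), whereas you collapse it into one by observing that each trace is good with probability at least $0.97$; both are valid and equivalent in substance.
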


\begin{algorithm}
\caption{Crude Estimation of all gaps}
\label{alg:crude}
\begin{algorithmic}[1]
\Procedure{Crude}{}
\State Use $O(n \log n)$ traces to compute $t$, where $t$ equals the number of $0$s in $x$.
\For{$m = 0$ to $t$}
    \For{$i = 1$ to $O(\log n)$}
        \State Draw trace $\tilde{x}^{(i)}$.
        \State $(q^{(i)}, b^{(i)}) \leftarrow \textsc{Align}(\tilde{x}^{(i)}, t, m, b_0, \dots, b_{m-1})$
    \EndFor
    \State Let $b_m$ be the median of $b^{(1)}, \dots, b^{(O(\log n))}$ \Comment{Some of the outputs $(q^{(i)}, b^{(i)})$ may be \textbf{FAIL}, but we can let $b^{(i)}$ be an arbitrary real number if \textsc{Align} failed on $\tilde{x}^{(i)}$, so that the median is well-defined.}
\EndFor
\State \textbf{Return} $(b_0, b_1, \dots, b_{t})$
\EndProcedure
\end{algorithmic}
\end{algorithm}

\subsection{Fine estimation}

In this section, we show how to exactly compute each $a_m$ with high probability, given the crude estimates $b_0, b_1, \dots, b_{t-1}$. This will again be done using an alignment procedure, but this time running the alignment both ``forward and backward''.

Namely, given a trace $\tilde{x}$, we will try to identify the $m$th and $(m+1)$st $1$'s from the original string, but we try to identify the $m$th $1$ by running \textsc{Align} on $\tilde{x}$ and the $(m+1)$st $1$ by running \textsc{Align} on the reverse string $\rev(\tilde{x}) := \tilde{x}_{|\tilde{x}|} \cdots \tilde{x}_2 \tilde{x}_1$. The idea is: assuming that we never go ahead in the alignment procedure, if we find some index $q$ in the forward alignment procedure with $g_q = f_{i_q} = m$, then the true position $i_q$ must be at least $m$. Likewise, if we do the alignment procedure in reverse until we believe we have found the $(t-m)$th $1$ from the back (equivalently, the $(m+1)$th $1$ from the front), the true position must be at most $m+1$.

So, the true positions of the index found in the forward alignment procedure can only be earlier than that of the index from the backward alignment procedure, if the true positions were exactly $m$ and $m+1$, respectively. Thus, by comparing the indices, we can effectively verify that the positions are correct, with negligible failure probability (rather than with $1-O(\delta)$ failure probability). This is the key towards obtaining the fine estimate of $a_m$, rather than just a coarse estimate that may be off by $O(\sqrt{a_m})$. 

Algorithm~\ref{alg:fine} formally describes the fine alignment procedure, using $N = O(n \log n)$ traces, assuming we have already done the coarse estimation to find $b_0, b_1, \dots, b_t$. 

\begin{algorithm}
\caption{Fine Estimation of all gaps}
\label{alg:fine}
\begin{algorithmic}[1]
\Procedure{Fine}{$t, b_0, \dots, b_t$}
\State Draw $N = O(n \log n)$ traces $\tilde{x}^{(1)},\dots,\tilde{x}^{(N)}$.
\For{$m = 0$ to $t$}
    \State Initialize $b^{(1)}, b^{(2)}, \dots, b^{(N)} \leftarrow \textbf{NULL}$.
    \For{$i = 1$ to $N$}
        \State $\tilde{m} \leftarrow$ number of $1$'s in $\tilde{x}$.
        \State $(q_{\text{f}}, b_{\text{f}}) \leftarrow \textsc{Align}(\tilde{x}^{(i)}, t, m, b_0, b_1 \dots, b_t)$.
        \State $(q_{\text{b}}, b_{\text{b}}) \leftarrow \textsc{Align}(\rev(\tilde{x}^{(i)}), t, t-m, b_t, b_{t-1}, \dots, b_0)$.
        \If{$q_{\text{f}} + q_{\text{b}} = \tilde{m}$}
            \State $b^{(i)} \leftarrow b_{\text{f}}$
        \EndIf
        \State Set $a_m$ to be $\frac{1}{1-\delta}$ times the average of all non-null $b^{(i)}$'s, rounded to the nearest integer.
    \EndFor
\EndFor
\State \textbf{Return} $(a_0, a_1, \dots, a_{t})$
\EndProcedure
\end{algorithmic}
\end{algorithm}

\begin{lemma} \label{lem:word-pushing}
    Suppose that $|b_i-(1-\delta) a_i| \le 10 \sqrt{a_i}$ for all $1 \le 0 \le t$. Fix indices $0 \le m \le t$ and $1 \le i \le N$, and for simplicity of notation, let $\tilde{x} := \tilde{x}^{(i)}$. Let $\tilde{m}$ be the number of $1$'s in $\tilde{x}$. Then, the probability that $q_{\text{f}}+q_{\text{b}} = \tilde{m}$, but either the forward or backward iterations finds an index in $\tilde{x}$ which does not correspond to the $m$th $1$ or $(m+1)$th $1$, respectively, from $x$, is at most $2n^{-10}$. Moreover, if the forward and backward iterations find indices in $\tilde{x}$ corresponding to the $m$th $1$ and $(m+1)$th $1$, respectively, then $q_{\text{f}}+q_{\text{b}} = \tilde{m}$. Finally, the probability of finding both corresponding indices is at least $0.98$.
\end{lemma}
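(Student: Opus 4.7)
The plan is to combine the one-sided guarantees from the forward and backward invocations of \textsc{Align}, using \Cref{lem:not-ahead} and \Cref{lem:main-technical} to handle the never-ahead and success arguments respectively. The crucial structural observation is that the condition $q_{\text{f}}+q_{\text{b}}=\tilde{m}$ precisely says that the $1$ in $\tilde{x}$ found by the forward run is immediately followed (in $\tilde{x}$) by the $1$ found by the backward run, and under the never-ahead events this is only consistent with these being the $m$-th and $(m+1)$-st $1$'s of $x$.

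First I would verify the hypothesis of \Cref{lem:not-ahead} for both invocations. By Chernoff and a union bound, the event $E_{\mathrm{conc}} = \{|\tilde{a}_i-(1-\delta)a_i|\le 10\sqrt{\log n\cdot a_i}\text{ for all }i\}$ holds with probability at least $1-n^{-10}$. On $E_{\mathrm{conc}}$, combining with the hypothesis $|b_i-(1-\delta)a_i|\le 10\sqrt{a_i}$ and the bound $a_i\ge L$ yields $|b_i-\tilde{a}_i|\le C_1\sqrt{b_i\log n}$, and symmetrically for $\rev(\bb),\rev(\tilde\ba)$. Thus \Cref{lem:not-ahead} applied to both directions shows that, outside an event of probability at most $2n^{-10}$ (absorbing $E_{\mathrm{conc}}$ into the constant), neither run is ever ahead.

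For part 1, assume the above never-ahead event holds and suppose $q_{\text{f}}+q_{\text{b}}=\tilde{m}$. Let $m_0$ denote the true index in $x$ of the $q_{\text{f}}$-th $1$ of $\tilde{x}$, and let $m_1$ denote the true index in $x$ of the $(\tilde{m}-q_{\text{b}}+1)$-th $1$ of $\tilde{x}$, i.e., the $q_{\text{b}}$-th from the back. Never-ahead for the forward run gives $m\le m_0$ (the algorithm's $\val$ reached $m$, so its $f$-value is at most the true index). Translating never-ahead for the backward run through the reversal (the backward algorithm finishes at $\val=t-m$, so the $q_{\text{b}}$-th $1$ of $\rev(\tilde{x})$ corresponds to an index $\ge t-m$ in $\rev(x)$, i.e., an index $\le m+1$ in $x$) gives $m_1\le m+1$. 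The relation $q_{\text{f}}+q_{\text{b}}=\tilde{m}$ says exactly that the forward- and backward-found $1$'s are consecutive in $\tilde{x}$ with the forward one first, so $m_1\ge m_0+1$. These three inequalities force $m_0=m$ and $m_1=m+1$, proving part 1. Part 2 is immediate: if the forward and backward runs find the true $m$-th and $(m+1)$-st $1$'s of $x$, then since no $1$ of $x$ lies strictly between them, they are consecutive in $\tilde{x}$, so $q_{\text{f}}+q_{\text{b}}=\tilde{m}$ holds deterministically.

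For part 3 (the $0.98$ success bound) I would unpack the randomness: success of the forward run requires $w_m=w_{m+1}=1$, $E_{\mathrm{conc}}$, and the event of \Cref{lem:main-technical} applied to $\ba_{0:m},\bb_{0:m}$ (which depends only on $w_1,\ldots,w_{m-1}$ and $\tilde{a}_0,\ldots,\tilde{a}_{m-1}$). Success of the backward run requires the same $w_m=w_{m+1}=1$, $E_{\mathrm{conc}}$, and the event of \Cref{lem:main-technical} applied to $\rev(\ba_{m+1:t+1}),\rev(\bb_{m+1:t+1})$ (which depends only on $w_{m+2},\ldots,w_{t-1}$ and $\tilde{a}_{m+1},\ldots,\tilde{a}_t$). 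Conditional on $E_{\mathrm{conc}}$ and $w_m=w_{m+1}=1$, the two \Cref{lem:main-technical} events are independent because they depend on disjoint coordinates of $(w,\tilde{a})$. Each holds with probability at least $1-200\delta$ by \Cref{lem:main-technical}, and $\Pr[w_m=w_{m+1}=1]=(1-\delta)^2$, $\Pr[E_{\mathrm{conc}}]\ge 1-n^{-10}$, so the joint success probability is at least $(1-\delta)^2(1-200\delta)^2(1-n^{-10})\ge 0.98$ for $\delta\le 1/(3\cdot 10^6)$. The main obstacle in this proof is getting the reversal conversion right (translating never-ahead of the backward run into a one-sided bound $m_1\le m+1$ in $x$) and verifying the independence decomposition in part 3 cleanly; once those are in place the argument is essentially bookkeeping on top of \Cref{lem:not-ahead} and \Cref{lem:main-technical}.
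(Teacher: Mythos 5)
Your proposal is correct and follows essentially the same route as the paper's proof: use Lemma~\ref{lem:not-ahead} to get the one-sided "never ahead" bounds $m_0 \ge m$ (forward) and $m_1 \le m+1$ (backward, after translating through the reversal), then observe that $q_{\text{f}} + q_{\text{b}} = \tilde{m}$ forces the two found $1$'s to be consecutive in $\tilde{x}$, so $m_0 + 1 \le m_1$ and the three inequalities collapse to $m_0 = m$, $m_1 = m+1$; the second statement is the same deterministic observation; and the third combines Lemma~\ref{lem:not-ahead} and Lemma~\ref{lem:main-technical} for each direction. Your part-3 bookkeeping is actually a bit more explicit than the paper's (which in its proof body concludes only $0.9$ even though the lemma asserts $0.98$), and your observation that the forward and backward alignment events depend on disjoint coordinates of $(w, \tilde a)$ (with $E_{\mathrm{conc}}$ being a product event, so conditioning preserves independence) is a valid way to tighten the multiplication; the only small omission is that "finding the right $1$" requires both $f_m \ge m$ (Lemma~\ref{lem:main-technical}) \emph{and} $f_m \le m$ (Lemma~\ref{lem:not-ahead}), so each direction incurs an extra $n^{-10}$ beyond $200\delta$; this is of course negligible against the $O(\delta)$ terms and does not affect the $0.98$ conclusion.
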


\begin{proof}
    First, let us consider the forward alignment procedure. We know that $\val$ tracks $f_{i_q}$ when looking at the $q$th $1$ of $\tilde{x}$ (from left to right). So, if we do not return \textbf{FAIL}, then $f_{i_{q_{\text{f}}}} = m$. If $i_{q_{\text{f}}} < m$, this implies there is an index $i = i_{q_{\text{f}}}$ where $f_i > i$. The probability of this is at most $n^{-10}$, by \Cref{lem:not-ahead}. Otherwise, $i_{q_{\text{f}}} \ge m$, meaning that the $q_{\text{f}}$th $1$ in $\tilde{x}$ is after (or equal to) the $m$th $1$ in $x$.

    Likewise, if we consider the backward alignment procedure, if we do not return \textbf{FAIL}, then except for an event with probability at most $n^{-10}$, the $q_{\text{b}}$th $1$ in $\rev(\tilde{x})$ is ahead of (or equal to) the $(t-m)$th $1$ in $\rev(x)$. Equivalently, the $(\tilde{m}+1-q_{\text{b}})$th $1$ in $\tilde{x}$ (reading from left to right) is before (or equal to) the $(m+1)$th $1$ in $x$ (reading from left to right).

    So, barring a $2 \cdot n^{-10}$ probability event, the only way that the $q_{\text{f}}$th $1$ in $\tilde{x}$ is strictly before the $(\tilde{m}+1-q_{\text{b}})$th $1$ in $\tilde{x}$ is if the $q_{\text{f}}$th $1$ in $\tilde{x}$ is precisely the $m$th $1$ in $x$ and $(\tilde{m}+1-q_{\text{b}})$th $1$ in $\tilde{x}$ is precisely the $(m+1)$th $1$ in $x$. However, if $q_{\text{f}}+q_{\text{b}} = \tilde{m}$, then in fact the $q_{\text{f}}$th $1$ is before the $(\tilde{m}+1-q_{\text{b}})$th $1$ in $\tilde{x}$ (reading from left to right). This proves the first statement.

    Next, if we in fact found the corresponding indices, they are consecutive $1$'s in $x$, which means they must be consecutive $1$'s in $\tilde{x}$. So, if we found the $q_{\text{f}}$th $1$ from the left, and the $q_{\text{b}}$th $1$ from the right, we must have $q_{\text{f}}+q_{\text{b}} = \tilde{m}$.

    Finally, the event of finding both corresponding indices is equivalent to $f_m = m$ in the forward iteration and $f_{t-m} = t-m$ in the backward iteration. Conditioned on the corresponding $1$'s \emph{not} being deleted, each of these occur with at least $0.98$ probability, by Lemmas~\ref{lem:not-ahead} and~\ref{lem:main-technical}. So, the overall probability is at least $0.9$.
\end{proof}

We are now ready to prove \Cref{thm:main}. Indeed, given the accuracy of the crude estimation procedure, it suffices to check that for each $m$, we compute $a_m$ correctly, with at least $1 - n^{-5}$ probability.

\begin{theorem}[Fine Estimation] \label{thm:fine}
    Assume that $t$, the number of ones in $x$, is computed correctly, and for all $0 \le m \le t$, $|b_m-(1-\delta) a_m| \le 10 \sqrt{a_m}$.

    Then, for any fixed $m: 0 \le m \le t$, with at least $1 - n^{-5}$ probability, we compute the gap $a_m$ correctly.
\end{theorem}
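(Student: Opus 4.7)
The plan is to apply Lemma~\ref{lem:word-pushing} to reduce the computation of $a_m$ to estimating the mean of an empirical average of i.i.d.\ binomial random variables, and then to apply a Chernoff concentration bound. First I would define the good event $G$ that, for every trace $\tilde{x}^{(i)}$ on which Algorithm~\ref{alg:fine} accepts (i.e., on which $q_{\mathrm{f}} + q_{\mathrm{b}} = \tilde{m}$), both the forward and backward alignment procedures have correctly identified the $m$th and $(m+1)$st ones from $x$. By the first assertion of Lemma~\ref{lem:word-pushing} and a union bound over the $N = O(n \log n)$ traces, the complement of $G$ has probability at most $2 N \cdot n^{-10} \le n^{-8}$.

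Conditioned on $G$, on every accepted trace the returned value $b^{(i)}$ is exactly the number of zeros in $\tilde{x}^{(i)}$ between the $m$th and $(m+1)$st ones of $x$, which is distributed as $\tilde{a}_m \sim \Bin(a_m, 1-\delta)$. The key observation, which I expect to be the main obstacle, is that the acceptance event is independent of $\tilde{a}_m$: the forward procedure only inspects gaps of $\tilde{x}$ up to (but not including) the gap whose length will be returned as $b_{\mathrm{f}}$, and the symmetric statement holds for the backward procedure run on $\rev(\tilde{x})$. Hence the success of each alignment depends only on $\tilde a_0, \ldots, \tilde a_{m-1}$, $\tilde a_{m+1}, \ldots, \tilde a_t$, and on $w_1, \ldots, w_t$. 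Combined with the independence of the traces $\tilde{x}^{(i)}$, this makes the non-null $b^{(i)}$'s (conditional on $G$ and on the acceptance pattern) i.i.d.\ samples from $\Bin(a_m, 1-\delta)$.

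Next I would show that the number of accepted traces $S$ is at least $0.95 N$ with probability at least $1 - n^{-10}$. By the final assertion of Lemma~\ref{lem:word-pushing}, each trace is accepted with probability at least $0.98 - 2 n^{-10}$; independence across traces together with a Chernoff bound on $S$ then gives the claim.

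Finally, conditioned on $G$, on $S \ge 0.95 N$, and on the acceptance pattern, the sum $\sum_{i \,:\, b^{(i)} \neq \textbf{NULL}} b^{(i)}$ is distributed as $\Bin(S \cdot a_m, 1-\delta)$, so has mean $(1-\delta) S a_m$ and variance at most $S a_m$. A multiplicative Chernoff bound then shows that the average $\bar b := \frac{1}{S} \sum_i b^{(i)}$ satisfies $|\bar b - (1-\delta) a_m| < (1-\delta)/2$ except with probability at most $\exp(-\Omega(S/a_m))$. Since $S = \Omega(n \log n)$ and $a_m \le n$, this probability is at most $n^{-5}$ once the hidden constant in $N = O(n \log n)$ is chosen sufficiently large. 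Therefore $\bar b / (1-\delta)$, rounded to the nearest integer, equals $a_m$, which completes the plan.
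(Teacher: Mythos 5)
Your proof is correct and follows essentially the same approach as the paper: both hinge on (i) using Lemma~\ref{lem:word-pushing} to show that under the "never ahead" event, acceptance is equivalent (up to a $O(n^{-10})$ error event) to both alignment runs succeeding, and (ii) the observation that the correct-alignment event depends only on $\tilde a_0,\ldots,\tilde a_{m-1},\tilde a_{m+1},\ldots,\tilde a_t$ and $w_1,\ldots,w_t$ and is therefore independent of $\tilde a_m$, so the retained $b^{(i)}$'s are conditionally $\Bin(a_m,1-\delta)$. One small imprecision to tighten: you write that "the acceptance event is independent of $\tilde a_m$," but $q_{\text{f}}+q_{\text{b}}=\tilde m$ by itself is \emph{not} independent of $\tilde a_m$ (a misaligned run may have traversed the $m$th gap); what is independent is the correct-alignment event $E_i$, and since under your good event $G_i$ one has $G_i\cap A_i = E_i$, your subsequent reasoning is valid—the paper avoids this phrasing by directly bounding $\BE[b^{(i)} \mid b^{(i)}\neq\textbf{NULL}]$ up to an $O(n^{-9})$ perturbation, whereas you condition away the bad event first, which is arguably cleaner.
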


\begin{proof}
    For any fixed iteration $i: 1 \le i \le N$, if both the forward and backward procedures correctly identify the $m$th and $(m+1)$th $1$'s from the left, respectively, then $q_{\text{f}}+q_{\text{b}} = \tilde{m}$ by \Cref{lem:word-pushing}. In this case, we will compute an actual value $b^{(i)} = b_{\text{f}}$. Moreover, as discussed in the proof of \Cref{lem:crude-main-analysis}, the event that the forward procedure correctly identifies the right $1$ only depends on $\bb$, $\hat{a}_0, \dots, \hat{a}_{m-1}$, and the events of whether the first $m$ $1$'s are deleted. Thus, the event that the backward procedure correctly identifies the right $1$ only depends on $\bb$, $\hat{a}_{m+1}, \dots, \hat{a}_t$, and the events of whether the $(m+1)$th $1$ until the $t$th $1$ are deleted.

    Thus, the forward and backward procedure correctly identifying the right $1$'s is independent of $\hat{a}_m \sim \Bin(a_m, 1-\delta)$. Moreover, in this case, $b_{\text{f}}$ is precisely $\hat{a}_m$, since $q_{\text{f}}$ is the position in $\tilde{x}$ corresponding to the $m$th $1$ in $x$, and neither the $m$th nor $(m+1)$th $1$ can be deleted if both of these $1$'s are identified.

    So, if the forward and backward procedures identifying the right $1$'s for trace $\tilde{x}^{(i)}$, the conditional distribution of $b^{(i)}$ is $\Bin(a_m, 1-\delta)$. However, we really want to look at the distribution conditioned on the event $q_{\text{f}}+q_{\text{b}} = \tilde{m}$. Indeed, by \Cref{lem:word-pushing}, this event is equivalent to either the forward and backward procedures identifying the right $1$'s, or some other event which occurs with at most $2n^{-10}$ probability. Because $b^{(i)}$ is clearly between $0$ and $n$, and since the probability of both $1$'s being correctly identified is at least $0.9$ by \Cref{lem:word-pushing}, the expectation of $b^{(i)}$, conditioned on not being \textbf{NULL}, is $a_m(1-\delta) \pm O(n^{-10} \cdot n) = a_m(1-\delta) \pm O(n^{-9})$.

    By a Chernoff bound, the number of $1 \le i \le N$ with $b^{(i)} \neq \textbf{NULL}$ is at least $0.5 \cdot N$ with at least $1 - n^{-10}$ probability, since in expectation it is at least $0.9 N$. Then, by another Chernoff bound, the empirical average of all such $b^{(i)}$ is within $0.1$ of its expectation with $1 - n^{-10}$ probability, which is $a_m(1-\delta) \pm O(n^{-9})$. Thus, taking the empirical average and dividing by $1-\delta$, with at most $O(n^{-10})$ failure probability, $\frac{1}{1-\delta}$ times the average of all non-null $b^{(i)}$'s is within $0.2$ of $a_m$, and thus rounds to $a_m$.
\end{proof}

\bibliographystyle{alpha}
\bibliography{biblio}

\end{document}